\title             {Tight analysis of the primal-dual method for edge-covering pliable set families}
\titlerunning{Tight analysis of the primal-dual method for edge-covering pliable set families}
\author{Zeev Nutov}{The Open University of Israel}{nutov@openu.ac.il}
{https://orcid.org/0000-0002-6629-3243}{}
\authorrunning{Zeev Nutov}
\begin{document}

\maketitle


\newcommand {\ignore} [1] {}

\newcommand{\sem}    {\setminus}
\newcommand{\subs}   {\subseteq}
\newcommand{\empt}   {\emptyset}

\newcommand{\f}   {\frac}

\def\al   {\alpha}
\def\be {\beta}
\def\ga {\gamma}
\def\de   {\delta}
\def\eps {\epsilon}
\def\la    {\lambda}
\def\th    {\theta}

\def\CC  {{\cal C}}
\def\FF   {{\cal F}}
\def\LL   {{\cal L}}
\def\RR  {{\cal R}}
\def\SS   {{\cal S}}
\def\TT   {{\cal T}}

\def\sfec    {{\sc Set Family Edge Cover}}
\def\nmcc  {{\sc Near Min-Cuts Cover}}
\def\ckecs  {{\sc Cap-$k$-ECSS}}               
\def\fgc       {{\sc FGC}}                                 

\keywords{primal dual method, pliable set family, approximation algorithms} 

\begin{abstract}
A classic result of Williamson, Goemans, Mihail, and Vazirani [STOC 1993: 708--717]  
states that the problem of covering an uncrossable set family by a min-cost edge set 
admits approximation ratio~$2$, by a primal-dual algorithm with a reverse delete phase.
Bansal, Cheriyan, Grout, and Ibrahimpur [ICALP 2023: 15:1–15:19] showed that 
this algorithm achieves approximation ratio $16$ for a larger class of so called $\gamma$-pliable set families,
that have much weaker uncrossing properties. 
The approximation ratio $16$ was improved to $10$ in \cite{N-prel}.
Recently, Bansal \cite{B} stated 
approximation ratio $8$ for $\ga$-pliable families and an improved 
approximation ratio $5$ for an important particular case of the family of cuts of size $<k$ of a graph $H$,
but his proof has an error.
We will improve the approximation ratio to $7$ for the former case 
and give a simple proof of approximation ratio $6$ for the latter case.
Furthermore, if $H$ is $\la$-edge-connected  
then we will show a slightly better approximation ratio $6-\f{1}{\be+1}$, 
where $\be=\left\lfloor \f{k-1}{\lceil(\la+1)/2\rceil}\right\rfloor$.
Our analysis is supplemented by examples showing that 
these approximation ratios are tight for the primal-dual algorithm.
\end{abstract}

\section{Introduction} \label{s:intro}

For an edge set or a graph $J$ on node set $V$ and disjoint node subsets $S,T \subs V$ let $\de_J(S,T)$ denote the set 
of edges in $J$ between $S$ and $T$, and let $d_J(S,T)=|\delta_J(S,T)|$ be their number;
we let $\de_J(S)=\de_J(S,V\sem S)$ and $d_J(S)=d_J(S,V\sem S)$.
An edge set $J$ {\bf covers} $S$ if $d_J(S) \geq 1$.
The following generic meta-problem captures dozens of specific  network design problems,
among them {\sc Steiner Forest}, {\sc $k$-Constrained Forest}, {\sc Point-to-Point Connection}, 
{\sc Steiner Network Augmentation}, and many more.  

\begin{center}
\fbox{\begin{minipage}{0.98\textwidth} \noindent
\underline{\sc Set Family Edge Cover} \\ 
{\em Input:} \ \ A graph $G=(V,E)$ with edge costs $\{c_e:e \in E\}$, a set family $\FF$ on $V$. \\ 
{\em Output:} A min-cost edge set $J \subs E$ such that $d_J(S) \geq 1$ for all $S \in \FF$.
\end{minipage}} \end{center}

In this problem the family $\FF$ may not be given explicitly, but we will require that some queries related to $\FF$ can be 
answered in time polynomial in $n=|V|$. 
An inclusion-minimal set in $\FF$ is called an $\FF$-{\bf core}, or just a {\bf core}, if $\FF$ is clear from the context. 
Following previous work, we will require that for any edge set $J$, 
the cores of the {\bf residual family  $\FF^J=\{S \in \FF:d_J(S)=0\}$} of $\FF$ 
(the family of sets in $\FF$ that are uncovered by $J$) can be computed in time polynomial in $n=|V|$.

Agrawal, Klein and Ravi \cite{AKR} designed and analyzed a primal-dual algorithm
for the {\sc Steiner Forest} problem, and showed that it achieves approximation ratio $2$. 
A classic result of Goemans and Williamson \cite{GW} from the early 90's shows by an elegant proof 
that the same algorithm applies for proper set families, where $\FF$ is {\bf proper} if 
it is {\bf symmetric} ($A \in \FF$ implies $V \sem A \in \FF$) and has the {\bf disjointness property} 
(if $A,B$ are disjoint and $A \cup B \in \FF$ then $A \in  \FF$ or $B \in \FF$).
Slightly later, Williamson, Goemans, Mihail, and Vazirani \cite{WGMV} (henceforth WGMV) 
further extended this result to the more general class of {\bf uncrossable families} 
($A \cap B,A \cup B \in \FF$ or $A \sem B,B \sem A \in \FF$ whenever $A,B \in \FF$),
by adding to the algorithm a novel reverse-delete phase.
They posed an open question of extending this algorithm to 
a larger class of set families and combinatorial optimization problems.
However, for 30 years, the class of uncrossable set families remained the most general generic class 
of set families for which the WGMV algorithm achieves a constant approximation ratio.

Bansal, Cheriyan, Grout, and Ibrahimpur \cite{BCGI} (henceforth BCGI) 
analyzed the performance of the WGMV algorithm \cite{WGMV}
for the following generic class of set families that arise in variants of capacitated network design problems.  

\begin{definition}
Two sets $A,B$ {\bf cross} if all the sets $A \cap B,V \sem (A \cup B),A \sem B,B \sem A$ are non-empty.
A set family $\FF$ is {\bf pliable} if $\empt,V \notin \FF$ and for any $A,B \in \FF$ 
at least two of the sets $A \cap B,A \cup B, A\sem B,B \sem A$ belong to $\FF$.  
We say that $\FF$ is {\bf $\ga$-pliable} if it has the~following 
{\bf Property $(\gamma)$:} 
For any edge set $I$ and sets $S_1 \subset S_2$ in the residual family $\FF^I$,  
if $\FF^I$-core (an inclusion minimal set in  $\FF^I$) $C$ crosses each of $S_1,S_2$,
then the set $D=S_2 \sem (S_1 \cup C)$ is either empty or belongs to $\FF^I$.
\end{definition}

BCGI showed that the WGMV algorithm 
achieves approximation ratio $16$ for $\gamma$-pliable families, and that Property $(\gamma)$ is essential -- 
without it the cost of a solution found by the WGMV algorithm can be
$\Omega(\sqrt{n})$ times the cost of an optimal solution.
Another generalization of uncrossable families is considered in \cite{N-prd}.
A set family $\FF$ is {\bf semi-uncrossable} if for any $A,B \in \FF$ we have that
$A \cap B \in \FF$ and one of $A \cup B,A \sem B,B \sem A$ is in $\FF$, or $A \sem B,B \sem A \in \FF$. 
One can verify that semi-uncrossable families are sandwiched between uncrossable and $\ga$-pliable families. 
The WGMV algorithm achieves the same approximation ratio $2$ for semi-uncrossable families,
and \cite{N-prd} shows that many problems can be modeled by semi-uncrossable families that are not uncrossable. 

The approximation ratio $16$ of BCGI \cite{BCGI} for $\ga$-pliable families 
was improved to $10$ in \cite{N-prel}; in fact, the analysis in \cite{N-prel} 
immediately implies approximation ratio $9$, see Lemma~\ref{l:9}.
Recently Bansal \cite{B} stated an approximation ratio of $8$. 
Here we improve the approximation ratio to $7$, and show that this bound is 
asymptotically  tight for the WGMV algorithm.

\begin{theorem} \label{t:1}
The {\sc Set Family Edge Cover} problem with a $\gamma$-pliable set family $\FF$ 
admits approximation ratio $7$.
\end{theorem}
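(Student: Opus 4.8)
The plan is to follow the standard primal-dual accounting and reduce the theorem to a per-iteration \emph{degree bound}. Recall that after the reverse-delete phase every edge $e \in J$ is tight, so $c(J)=\sum_{S} y_S\, d_J(S)$, while weak LP duality gives $\mathrm{OPT}\ge \sum_S y_S$. Since the dual $y$ is grown in iterations, each of which raises the variables of the current cores $\CC$ uniformly by some $\eps>0$, it suffices to prove that in every iteration
\[
\sum_{C \in \CC} d_J(C) \le 7\,|\CC|,
\]
where $J$ is the final solution. This single inequality is what I would target; summing it over the iterations, weighted by the $\eps$'s, yields $c(J)\le 7\sum_S y_S\le 7\,\mathrm{OPT}$.

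First I would record the structural fact that distinct cores are disjoint. If two cores $A,B$ crossed, pliability would force at least two of $A\cap B, A\cup B, A\setminus B, B\setminus A$ into $\FF$; but $A\cap B$, $A\setminus B$, and $B\setminus A$ are all proper subsets of a core, contradicting inclusion-minimality, so at most $A\cup B$ could qualify, a contradiction. Non-crossing minimal sets are either disjoint or equal, so cores are pairwise disjoint. Consequently each edge of $J$ crosses at most two cores (one per endpoint), and hence $\sum_{C\in\CC} d_J(C)\le 2\,|F|$, where $F=\bigcup_{C\in\CC}\delta_J(C)$ is the set of \emph{active} edges. This reduces matters to controlling the active edges, although the final constant $7$ will require charging edge-endpoints to cores directly rather than through the lossy bound $2|F|$.

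Next I would exploit the reverse-delete phase to attach to each active edge $e$ a witness set $W_e$ in a residual family for which $e$ is the unique edge of $J$ crossing $W_e$ at the moment $e$ is examined. The standard argument, adapted to pliable families, shows these witnesses can be arranged into a laminar family $\LL$, and the heart of the proof is to understand how the cores $\CC$ sit inside $\LL$. Property $(\gamma)$ enters precisely here: whenever a core $C$ crosses two nested residual sets $S_1\subset S_2$, the difference $D=S_2\setminus(S_1\cup C)$ is empty or again lies in the residual family, which lets me repair such crossings and keep the laminar bookkeeping under control.

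Finally I would set up a token scheme on the forest induced by $\LL$ with the cores as its leaves, assigning each endpoint of an active edge a bounded number of tokens and showing that, on average, no core collects more than $7$. I expect the main obstacle to be exactly this charging step: in the uncrossable case each core is met by essentially two witness sets, but under $\gamma$-pliability a single core can interact with several laminar sets, and the delicate part is to invoke Property $(\gamma)$ to cap the number of such interactions so that the constant comes out to $7$ rather than the weaker $9$ obtainable from the plain count. Identifying which endpoints may be charged to a common core, and ruling out the worst simultaneous-crossing configurations, is where the improvement over the previous analysis must come from.
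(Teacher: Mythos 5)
Your setup is sound and coincides with the paper's own reduction: the primal-dual/reverse-delete accounting does reduce the theorem to the per-iteration inequality $\sum_{C\in\CC} d_I(C) \le 7|\CC|$ for an inclusion-minimal cover $I$ of a residual family (which is again $\gamma$-pliable, cf.\ Lemma~\ref{l:res}) in which every edge covers some core; the disjointness of cores is argued correctly; and the laminar witness family is the right scaffolding. But the proposal stops exactly where the theorem lives: your final paragraph names the charging step as ``the main obstacle'' and does not carry it out, so no argument actually producing the constant $7$ is given. There is also a misattribution: laminarity of the witness family needs only pliability (Lemma~\ref{l:witness}, due to BCGI); Property~$(\gamma)$ is not what keeps the laminar bookkeeping intact. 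Its one and only role is the annulus statement (Lemma~\ref{l:Si}): if a white set $S_i$ has a unique child $S_{i-1}$ and a core $C$ crosses both, then $S_i \sem S_{i-1} \subset C$, and all the quantitative control flows from this.

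Concretely, what is missing is the content of Section~\ref{s:t}: (i) contracting maximal white chains of the witness tree into weighted shortcut edges and showing, via Lemmas~\ref{l:a} and~\ref{l:cases}, that every chain has weight at most $5$; (ii) the new structural fact (Lemma~\ref{l:pliable}) that for a \emph{bad pair} of heavy (weight $\ge 3$) chains $\SS \prec \SS'$ with no black set between them one has $w(\SS)+w(\SS') \le 7$ with $w(\SS')=3$, and that no third heavy chain can lie between them --- proved by using the annulus lemma to show a core met by the lower chain swallows $S'_1 \sem S'_0$ of the upper chain, which forces $\ell'=1$ and caps both weights; and (iii) the counting step (Lemma~\ref{l:7}): after a weight transfer that eliminates bad pairs without raising the maximum weight above $5$, each heavy edge is injectively assigned its closest descendant black node, so the number $t$ of heavy edges satisfies $t \le |B|$, whence $w(I) \le 3t + 2(|W|+|B|-1) \le 7|B|-2$. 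Without (ii) in particular, the token scheme you gesture at cannot beat the bound already implicit in prior work: the weight-$5$ bound of Lemma~\ref{l:9} by itself yields only $w(I) \le 9|B|-4$, and the entire improvement from $9$ to $7$ \emph{is} the bad-pair lemma, for which your proposal offers no substitute.
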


A set family $\FF$ is {\bf sparse} if for any edge set $J$, 
every set $S \in \FF^J$ crosses at most one $\FF^J$-core.
A particular important case of  $\ga$-pliable families arise from  the {\sc Small Cuts Cover} problem, 
in which we seek to cover by a min-cost edge set the set family 
$\FF=\{\empt \neq S \subset V: d_H(S) < k\}$ of cuts of size/capacity $<k$ of a graph $H$. 
Bansal \cite{B} made an important observation that this family is sparse.
To see this, note that if $S \in \FF$ crosses an $\FF$-core $C$ then
$C \cap S \notin \FF$ and $C \sem S \notin \FF$ by the minimality of $C$, 
hence $d_H(C \cap S) \ge k$ and $d_H(C \sem S) \ge k$. Thus we have
$2d_H(C \cap S,C \sem S)=d_H(C \cap S)+d_H(C \sem S)-d_H(C) \ge k+1$.
One can see that the cores are pairwise disjoint, hence if $S$ crosses two cores $C_1,C_2$ then 
$d_H(S) \ge d_H(C_1 \cap S,C_1 \sem S)+d_H(C_2 \cap S,C_2 \sem S) \ge 2 \lceil (k+1)/2 \rceil$, contradicting that $d_G(S)<k$. 
Since $\FF^J$ is the family of cuts of size $<k$ of the graph $H \cup J$, we get that this $\FF$ is sparse.

Bansal \cite{B} also stated an approximation ratio of $5$ for $\ga$-pliable sparse families,
but his proof has an error \cite{Bpm}.
We note that in one of earlier versions v2 of his arXiv draft \cite{B}, along with the 
$5$-approximation for {\sc Small Cuts Cover} he also states a $6$-approximation for $\ga$-pliable sparse families. 
The proof provided relies on several complex reductions and decompositions and many phases of token distribution.
We give a relatively simple proof of a $6$-approximation in this case, 
that relies on a clear combinatorial statement (Lemma~\ref{l:pliable}), 
and also give an example that this bound is asymptotically tight for the WGMV algorithm.

We will also investigate the dependence of the approximation ratio on the ``inverse'' parameter -- 
the maximum number of pairwise disjoint sets in $\FF^J$ that a single core can cross. 
We say that a set family $\FF$ is {\bf $\be$-crossing} for an integer $\be \ge 1$ if for any
edge set $J$, an $\FF^J$-core crosses at most $\be$ pairwise disjoint sets in $\FF^J$.

\begin{theorem} \label{t:2}
The {\sc Set Family Edge Cover} problem with a $\gamma$-pliable sparse set family $\FF$ 
admits approximation ratio $6$.
If in addition $\FF$ is $\beta$-crossing then the approximation ratio is $6-\f{1}{\be+1}$.
\end{theorem}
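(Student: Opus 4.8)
The plan is to analyze the WGMV primal-dual algorithm by the standard token/charging argument, specialized to exploit sparsity. Recall that the WGMV analysis reduces to bounding, in each phase of the primal-dual algorithm, the total degree of the active cores (the minimal violated sets) into the edges that are eventually retained after the reverse-delete step. The generic primal-dual accounting shows that the approximation ratio equals the maximum, over all phases, of the ratio between the number of retained edges incident to the active cores and the number of active cores. So I would first set up this framework precisely, fixing a phase, letting $\CC$ be the collection of active $\FF^I$-cores (for $I$ the edges already committed), and letting $F$ be the set of edges retained by reverse-delete that are ``charged'' to this phase. The goal is to show that the tokens distributed, one per endpoint of an $F$-edge landing in an active core, can be redistributed so that each core receives at most $6$ tokens (or at most $6-\f{1}{\be+1}$ in the $\be$-crossing case).

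The key structural input is Lemma~\ref{l:pliable}, the ``clear combinatorial statement'' promised in the introduction, together with sparsity. Sparsity says every set of $\FF^I$ crosses at most one core, which severely limits how the retained edges interact with the laminar-like structure of the cores. First I would argue that the cores are pairwise disjoint (as already observed in the excerpt for the small-cuts case, and which should hold from pliability plus minimality in general), so the cores form an antichain of disjoint sets. Then, using Property $(\gamma)$ and sparsity, I would establish that the witness sets and the retained edges organize so that each retained edge has at most a bounded number of endpoints that can be charged, and that ``internal'' edges (both endpoints in the same core) and ``crossing'' edges contribute in a controlled way. The crux is a counting inequality: if the retained edges incident to the active cores were too many relative to the number of cores, one could use Lemma~\ref{l:pliable} to exhibit a forbidden configuration or to find a strictly smaller residual-violated set, contradicting minimality of a core or the reverse-delete maximality.

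For the refined bound $6-\f{1}{\be+1}$ under the $\be$-crossing hypothesis, I would track the sets that a single core crosses: since a core crosses at most $\be$ pairwise disjoint sets in $\FF^I$, the number of ``outgoing'' charges a core can both send and receive is capped by $\be$. The plan is to perform an amortized charging in which a core that is near the worst case of $6$ tokens must be crossed by close to the maximum number of sets, but the $\be$-crossing bound forces a deficit of at least $\f{1}{\be+1}$ on average; summing over all cores in the phase yields the improved ratio. Concretely, I expect the argument to assign fractional tokens and to show $\sum_{C \in \CC}(\text{tokens to } C) \le (6-\f{1}{\be+1})|\CC|$ by a global double-counting over the crossing pairs, where the $\f{1}{\be+1}$ savings comes from the fact that each crossing relation is shared among at most $\be+1$ participating sets.

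The main obstacle, I expect, is the simultaneous handling of two competing sources of charge: edges whose endpoints lie in distinct cores (contributing to the ``$6$'' via a matching/forest argument on the contracted core graph) and the sets that cross a core (contributing via Property~$(\gamma)$). Getting a clean token bound of exactly $6$ rather than $7$ requires that sparsity eliminate the worst configurations that force the extra unit in the general $\ga$-pliable analysis of Theorem~\ref{t:1}; pinning down exactly which configuration sparsity rules out, and verifying that Lemma~\ref{l:pliable} supplies precisely the inequality needed to close the gap, is the delicate part. I would also need to check the tightness examples are consistent with the charging, to be sure the bound $6$ (resp.\ $6-\f{1}{\be+1}$) is not improvable by this method, which guides where the charging must be tight and hence where no further slack can be extracted.
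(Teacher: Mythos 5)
There is a genuine gap: your proposal correctly reproduces the generic primal-dual reduction (Lemma~\ref{l:2C'}), but everything after that is a plan rather than an argument, and it is missing the two ideas that actually carry the paper's proof. First, the bound $\sum_{C \in \CC} d_I(C) \le 6|\CC|-2$ is \emph{amortized}, not per-core: a single core can be covered by arbitrarily many edges of $I$, so your stated goal of redistributing tokens ``so that each core receives at most $6$'' cannot work locally. The paper amortizes over the laminar witness family of Lemma~\ref{l:witness}, compressed into the weighted shortcut tree $T=(B\cup W,I)$ whose edge weights (up to $5$) aggregate entire white chains; without this tree structure there is no vehicle for the induction (Lemma~\ref{l:tree''}: $4$ tokens per leaf, $3$ per internal black node, $2$ extra per node of $B^*$, pushed up the tree). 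Second, you never pin down where sparsity enters. It is used in exactly one place, Lemma~\ref{l:sparse}: for a white chain of weight $5$ the nodes $a_\ell, b_{\ell+1}$ lie in \emph{distinct} cores $C,C'$, and since the parent set $S$ of $S_\ell$ can cross at most one core, one of $C,C'$ is owned by $S$, forcing \emph{both} ends of a weight-$5$ shortcut edge to be black (and one end black when the weight is $4$). This upgrade over the non-sparse case (where only the lower end is black, Lemma~\ref{l:9}) is precisely what lets the token scheme close at $6$ instead of $7$; your appeal to ``a forbidden configuration or a strictly smaller residual-violated set, contradicting minimality'' does not correspond to any step of the actual proof, which is a constructive charging with no contradiction argument.

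The $\be$-crossing refinement in your sketch is also off target. Your claimed mechanism --- ``each crossing relation is shared among at most $\be+1$ participating sets'' --- does not follow from the definition ($\be$-crossing bounds how many \emph{pairwise disjoint} sets of $\FF^J$ a single core can cross, and sparsity separately bounds how many cores cross a set, namely one). The paper instead proves a lower bound on the number of cores: for each heavy edge $e \in H^*$ whose assigned black node $b_e$ is a leaf, there is a core $C_e$ crossing every set of the white chain of $e$; choosing one witness set $S_e$ per such edge yields pairwise disjoint sets (the relevant edges are pairwise incomparable in $T$, since $b_e$ is a leaf), so $\be$-crossing gives $|\{C_e\}| \ge |L \cap B^*|/\be$ and hence $|\CC| \ge |L| + |L \cap B^*|/\be$. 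Plugging this into the inequality of Lemma~\ref{l:tree''} via $|L|+|B^*| = |L \cup B^*| + |L \cap B^*|$ produces $w(I) \le \left(6-\f{1}{\be+1}\right)|\CC|$. Your double-counting over ``crossing pairs'' has no analogue of the disjointness step, which is where the saving actually comes from; as written, the fractional-token argument would not yield any bound.
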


The family of cuts of size/capacity $<k$ of a $\la$-edge-connected graph $H$
is $\be$-crossing for $\be=\left \lfloor \f{k-1}{\lceil(\la+1)/2\rceil}\right\rfloor$.
To see this, note that if $S \in \FF$ crosses an $\FF$-core $C$ then 
$d_H(S \cap C) \ge k$ since $S \cap C \notin \FF$, 
$d_H(S \sem C) \ge \la$ since $H$ is $\la$-edge-connected, and 
$d_H(S) \le k-1$ since $S \in \FF$.
Thus we have 
$2d_H(S \cap C,S \sem C) = d_H(S \cap C)+d_H(S \sem C)-d_H(S) \ge \la+1$.
If each of $p$ disjoint sets $S_1, \ldots,S_p$ in $\FF$ crosses $C$, then 
each $S_i$ contributes to $\de_H(C)$ the set $F_i=\de_H(S_i \cap C,S_i \sem C)$ 
of at least $|F_i| \ge \lceil (\la+1)/2 \rceil$ edges.
The edge sets $F_i$ are pairwise disjoint, thus
$k-1 \ge d_H(C) \ge p\cdot \lceil (\la+1)/2 \rceil$.
Combined with Theorem~\ref{t:2} we get:

\begin{corollary} \label{c:be}
The {\sc Small Cuts Cover} problem with $\la$-edge-connected graph $H$ 
admits approximation ratio $6-\f{1}{\be+1}$, 
where $\be=\left\lfloor \f{k-1}{\lceil(\la+1)/2\rceil}\right\rfloor$.
\end{corollary}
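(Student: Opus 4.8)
The plan is to obtain the corollary as an immediate instance of Theorem~\ref{t:2}, so the whole task reduces to checking that the small-cuts family $\FF=\{\empt \neq S \subset V: d_H(S)<k\}$ satisfies the three hypotheses of that theorem: that it is $\ga$-pliable, sparse, and $\be$-crossing with $\be=\left\lfloor \f{k-1}{\lceil(\la+1)/2\rceil}\right\rfloor$. The first two are already in hand: the family is $\ga$-pliable (this is the distinguished case of $\ga$-pliable families highlighted in the paper), and sparseness was verified above using the submodularity of the cut function. Thus the only genuinely new step is the bound on the crossing number, and I would organize everything around it. As a preliminary remark I would record that for any edge set $J$ the residual family $\FF^J$ is precisely the family of cuts of size $<k$ in $H \cup J$, and that $H \cup J$ remains $\la$-edge-connected because $H \subs H \cup J$; hence it suffices to prove the counting bound for $\FF$ and a core $C$, knowing that it transfers verbatim to each $\FF^J$.

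The heart of the argument is a charging of edges to the boundary $\de_H(C)$ of a core $C$. First I would write down the cut identity $d_H(S \cap C)+d_H(S \sem C)=d_H(S)+2\,d_H(S \cap C,S \sem C)$, which follows by splitting the edges leaving $S$ according to whether their other endpoint lies in $C$. For any $S \in \FF$ that crosses $C$, minimality of the core gives $S \cap C \notin \FF$ and hence $d_H(S \cap C) \ge k$; the $\la$-edge-connectivity gives $d_H(S \sem C) \ge \la$; and $S \in \FF$ gives $d_H(S) \le k-1$. Feeding these into the identity yields $2\,d_H(S \cap C,S \sem C) \ge \la+1$, so the edge set $F_S=\de_H(S \cap C,S \sem C)$ satisfies $F_S \subs \de_H(C)$ and $|F_S| \ge \lceil(\la+1)/2\rceil$.

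Next I would exploit disjointness. If pairwise disjoint sets $S_1,\dots,S_p \in \FF$ all cross $C$, then every edge of $F_{S_i}$ has both endpoints in $S_i$, so the disjointness of the $S_i$ forces $F_{S_1},\dots,F_{S_p}$ to be pairwise disjoint subsets of $\de_H(C)$. Summing gives $k-1 \ge d_H(C) \ge \sum_{i=1}^{p}|F_{S_i}| \ge p\,\lceil(\la+1)/2\rceil$, whence $p \le \be$. This is exactly the $\be$-crossing property, and combining it with Theorem~\ref{t:2} gives the claimed ratio $6-\f{1}{\be+1}$.

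I expect no conceptual obstacle, only a bookkeeping one: the charge must be applied to a core of the \emph{residual} family, so that minimality is taken relative to $\FF^J$ and the cut sizes are measured in $H \cup J$ rather than in $H$. As long as one states these residual invariants up front (the identification of $\FF^J$ with the small cuts of $H \cup J$, and the preservation of $\la$-edge-connectivity under adding $J$), the remainder is the short deterministic computation above with no case analysis, so the derivation should be routine once the crossing bound is in place.
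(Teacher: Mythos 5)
Your proposal is correct and follows essentially the same route as the paper: the identical three cut bounds ($d_H(S \cap C) \ge k$ by core minimality, $d_H(S \sem C) \ge \la$ by edge-connectivity, $d_H(S) \le k-1$ since $S \in \FF$) plugged into the same submodular identity to get $|F_S| \ge \lceil (\la+1)/2 \rceil$, the same observation that disjoint $S_i$ give pairwise disjoint $F_{S_i} \subs \de_H(C)$ forcing $p \le \be$, and then a direct appeal to Theorem~\ref{t:2}. Your explicit bookkeeping that $\FF^J$ is the small-cuts family of $H \cup J$, which remains $\la$-edge-connected, is left implicit in the paper but is exactly the right justification for transferring the bound to residual families.
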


The proofs of Theorems \ref{t:1} and \ref{t:2} rely on a new structural property 
of inclusion-minimal solutions that was not known prior to this paper, see Lemma~\ref{l:pliable}.

For additional applications of $\ga$-pliable families for the so called {\sc Flexible Graph Connectivity} problems see, 
for example, \cite{AHM,BCHI,CJ,BCGI,N-prel,B,BCKS}.
In particular, the second part of Theorem~\ref{t:2} can be used to improve approximation ratios for this problem, 
c.f. \cite{B}. 

The rest of this paper is organized as follows. In the next section we will describe the WGMV 
primal-dual algorithm for pliable set families and show that its approximation ratio is determined by a certain 
combinatorial problem. 
Theorems \ref{t:1} and \ref{t:2} are proved in 
Sections \ref{s:t} and \ref{s:sparse}, respectively.

\section{The WGMV algorithm and pliable families} \label{s:prd}

We start by describing the WGMV algorithm for an arbitrary set family $\FF$. 
Recall that an inclusion-minimal set in $\FF$ is called an $\FF$-{\bf core}, 
or just a {\bf core}, if $\FF$ is clear from the context; let $\CC_\FF$ denote the family of $\FF$-cores. 
Consider the following LP-relaxation {\bf (P)} for {\sc Set Family Edge Cover} and its dual program {\bf (D)}:
\[ \displaystyle
\begin{array} {lllllll} 
& \hphantom{\bf (P)} & \min              & \ \displaystyle \sum_{e \in E} c_e x_e                                                                 & 
    \hphantom{\bf (P)} & \max            & \ \displaystyle \sum_{S \in \FF} y_S  \\
& \mbox{\bf (P)}         & \ \mbox{s.t.} & \displaystyle \sum_{e \in \delta(S)} x_e  \geq 1   \ \ \ \forall S \in \FF \ \ \ \ \ \ \  &
    \mbox{\bf (D)}        & \ \mbox{s.t.} & \displaystyle \sum_{\delta(S) \ni e} y_S \leq c_e \ \ \ \forall e \in E \\
& \hphantom{\bf (P)} &                      & \ \ x_e \geq 0 \ \ \ \ \ \ \ \ \ \forall e \in E                                                                  &
    \hphantom{\bf (P)} &                      & \ \ y_S \geq 0 \ \ \ \ \ \ \ \ \ \ \forall S \in \FF 
\end{array}
\]

Given a solution $y$ to {\bf (D)}, an edge $e \in E$ is {\bf tight} if the inequality of $e$ in {\bf (D)} holds with equality.
The algorithm has two phases.

{\bf Phase~1} starts with $J=\emptyset$ an applies a sequence of iterations.
At the beginning of an iteration, we compute the family $\CC=\CC_{\FF^J}$ of $\FF^J$-cores.
Then we raise the dual variables corresponding to the $\FF^J$-cores
uniformly (possibly by zero), until some edge $e \in E \sem J$ becomes tight, and add $e$ to $J$.
Phase~1 terminates when $\CC_{\FF^J}=\empt$, namely when $J$ covers $\FF$.

{\bf Phase~2} is a ``reverse delete'' phase, in which we process edges in the reverse order
that they were added, and delete an edge $e$ from $J$ if $J \sem \{e_i\}$ still covers $\FF$.
At the end of the algorithm, $J$ is output.

The produced dual solution is feasible, hence 
$\sum_{S \in \FF} y_S \leq {\sf opt}$, by the Weak Duality Theorem.
To prove an approximation ratio of $\rho$, it is sufficient  
to prove that at the end of the algorithm the following holds for the returned solution $J$
and the dual solution $y$:
$$
\sum_{e \in J} c(e) \leq \rho \sum_{S \in \FF} y_S \ .
$$
As any edge in the solution $J$ returned by the algorithm is tight, this is equivalent to
$$
\displaystyle \sum_{e \in J} \sum_{\delta_J(S) \ni e} y_S \leq \rho \sum_{S \in \FF} y_S \ .
$$
By changing the order of summation we get:
$$
\sum_{S \in \FF} d_J(S) y_S \leq \rho \sum_{S \in \FF} y_S \ .
$$
It is sufficient to prove that at any iteration the increase at the left hand side is at most
the increase in the right hand side. 
Let us fix some iteration, and let $\CC$ be the family of cores at the beginning of this iteration. 
The increase in the left hand side is 
$\varepsilon \cdot \sum_{C \in \CC} d_J(C)$, 
where $\varepsilon$ is the amount by which the dual variables were raised in the iteration, 
while the increase in the right hand side is 
$\varepsilon \cdot \rho |\CC|$. Consequently, it is sufficient to prove that
$$
\sum_{C \in \CC} d_J(C) \leq \rho |\CC| \ .
$$

Let us use the following notation.
\begin{itemize}
\item
$J_0$ is the set of edges picked at Phase 1 before the current iteration.
\item
$I'=J \sem J_0$ is the set of edges picked after $J_0$ and survived the reverse-delete phase.
\item
$I=\bigcup_{C \in \CC} \de_{I'}(C)$ is the set of edges in $I'$ that cover some $C \in \CC$.
\end{itemize}

\begin{lemma}
Let $\FF'$ be the residual family of $\FF$ w.r.t. $J_0 \cup (I' \sem I)$. Then:
\begin{enumerate}[(i)]
\item
$I$ is an inclusion-minimal cover of $\FF'$. 
\item
$\CC$ is the family of $\FF'$-cores, namely, $\CC=\CC(\FF')$.
\end{enumerate}
\end{lemma}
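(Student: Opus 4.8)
The plan rests on two preliminary observations that I would record first. Since $\FF'=\FF^{J_0 \cup (I' \sem I)}$ is a residual family of $\FF$ with respect to an edge set containing $J_0$, adding edges only covers more sets, so $\FF' \subs \FF^{J_0}$. Moreover, by the definition $I=\bigcup_{C \in \CC}\de_{I'}(C)$, every edge of $I' \sem I$ covers no core, so $d_{I' \sem I}(C)=0$ for each $C \in \CC$; as $d_{J_0}(C)=0$ too (each $C$ is an $\FF^{J_0}$-core), each core $C$ lies in $\FF'$. The second observation concerns Phase~2: every edge that survives the reverse-delete phase lies in the returned solution $J$, so all of $I' \subs J$ (hence all of $I$) survives, whereas the edges of $J_0$ are added earliest and are therefore examined last in the reverse-delete order.

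For part (ii) I would argue both inclusions. Each $C \in \CC$ is inclusion-minimal in $\FF^{J_0}$ and lies in $\FF' \subs \FF^{J_0}$ by the first observation; minimality in the larger family forces minimality in the smaller one, so $C \in \CC(\FF')$. Conversely, any $\FF'$-core $C'$ lies in $\FF' \subs \FF^{J_0}$ and hence contains some $\FF^{J_0}$-core $C \in \CC$; since $C$ itself lies in $\FF'$ and $C \subs C'$ with $C'$ minimal in $\FF'$, we get $C=C'$, so $C' \in \CC$. This yields $\CC=\CC(\FF')$.

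For the covering half of part (i), fix $S \in \FF'$. Then $S \in \FF$, so the feasible solution $J$ satisfies $d_J(S)\ge 1$. Every edge of $J$ lies in $J_0 \cup I'$; those in $J_0$ do not cross $S$ because $d_{J_0}(S)=0$, and those in $I' \sem I$ do not cross $S$ because $d_{I' \sem I}(S)=0$, both following from $d_{J_0 \cup (I' \sem I)}(S)=0$. Hence any edge of $J$ crossing $S$ lies in $I$, giving $d_I(S)\ge 1$, so $I$ covers $\FF'$.

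The inclusion-minimality of $I$ is the step I expect to be the main obstacle, and it is where Phase~2 is used. For each $e \in I$ I must exhibit a witness $W \in \FF'$ with $\de_I(W)=\{e\}$. Since $e$ survived reverse-delete, at the instant $e$ was examined, deleting it would have uncovered some $W \in \FF$; as the current edge set covers $\FF$ throughout Phase~2, this means $e$ is the unique edge of the then-current set that crosses $W$. The delicate bookkeeping is to check which edges are present at that instant: all edges added before $e$ are still present (they are examined later), and $e$ lies in $I \subs I'$, hence was added at or after the current iteration, so $J_0$ (added earliest) is entirely present; in addition all of $I' \subs J$ survives and is present. Thus the current set contains all of $J_0 \cup I'$, and the uniqueness of $e$ simultaneously forces $d_{J_0}(W)=0$, $d_{I' \sem I}(W)=0$, and $\de_I(W)=\{e\}$. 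The first two give $W \in \FF^{J_0\cup(I'\sem I)}=\FF'$, and the last is exactly the required witness, completing the proof that $I$ is an inclusion-minimal cover of $\FF'$.
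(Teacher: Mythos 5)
Your proof is correct and follows essentially the same route as the paper: part (ii) is argued identically, and for part (i) your reverse-delete bookkeeping (all of $J_0 \cup I'$ is present when any $e \in I'$ is examined, so the uncovered witness $W$ satisfies $d_{J_0}(W)=d_{I'\sem I}(W)=0$ and $\de_I(W)=\{e\}$) is exactly the paper's argument, merely unrolled. The only cosmetic difference is that the paper factors this into two steps --- first showing $I'$ is an inclusion-minimal cover of $\FF_0=\FF^{J_0}$, then invoking the general fact that any $I \subs I'$ is an inclusion-minimal cover of the residual family $\FF_0^{I'\sem I}$ --- whereas you construct the witness for each $e \in I$ in one pass.
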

\begin{proof}
Let $\FF_0=\FF^{J_0}$ be the residual family of $\FF$ w.r.t. $J_0$,
and note that $\FF'$ is the residual family of $\FF_0$ w.r.t. $I' \sem I$.

We prove (i). 
Since the edges were deleted in reverse order,
the edges in $I'$ were considered for deletion when all edges in $J_0$ were still present.
Thus $I'$ is an inclusion-minimal cover of $\FF_0$.
This implies that $I$ is an inclusion-minimal cover of the residual family of $\FF_0$ w.r.t. $I' \sem I$
(this is so for any $I \subs I'$), which is $\FF'$.

We prove (ii).
By the definition, $\CC$ is the family of $\FF_0$-cores. 
No $C \in \CC$ is covered by $I' \sem I$, hence $\CC \subs \CC(\FF')$. 
This also implies that $\FF'$ has no other core $C' \notin \CC(\FF') \sem \CC$, as
otherwise $C' \in \FF_0$ and thus properly contains some $C \in \CC$, which is a contradiction.
\end{proof}

Observing that $d_J(C)=d_I(C)$ for all $C \in \CC$ (since no $C \in \CC$ is covered by $J_0 \cup (I' \sem I)$),
we have the following.

\begin{lemma} \label{l:2C'}
The WGMV primal-dual algorithm achieves approximation ratio $\rho$ 
if for any residual family $\FF'$ of $\FF$ the following holds.
If $\CC$ is the family of $\FF'$-cores and $I$ is an inclusion minimal cover of $\FF'$ 
such that every edge in $I$ covers some $C \in \CC$ then
\begin{equation} \label{e:2C}
\sum_{C \in \CC} d_I(C) \leq \rho |\CC| \ .
\end{equation}
\end{lemma}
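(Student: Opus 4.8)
The plan is to reduce the claimed approximation guarantee to a single per-iteration inequality and then identify the precise residual family to which the hypothesis applies. Everything needed has in fact already been assembled in the preceding discussion, so the proof should amount to a clean synthesis rather than new combinatorial work.

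First, I would recall the reduction carried out just before the statement: to certify approximation ratio $\rho$ it suffices to bound, at every iteration of Phase~1, the increase in $\sum_{S \in \FF} d_J(S) y_S$ by $\rho$ times the increase in $\sum_{S \in \FF} y_S$, and this amounts exactly to showing $\sum_{C \in \CC} d_J(C) \le \rho |\CC|$ for the core family $\CC$ present at the start of that iteration. Accordingly, I would fix an arbitrary iteration, let $\CC$ be its family of $\FF^J$-cores, and concentrate solely on establishing this one inequality.

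The key step is to exhibit a residual family meeting the premises of \eqref{e:2C}. Here I would invoke the structural lemma just proved: with $J_0$ the Phase-1 edges picked before the iteration, $I'=J \sem J_0$ the later surviving edges, and $I=\bigcup_{C \in \CC} \de_{I'}(C)$, the residual family $\FF'$ of $\FF$ with respect to $J_0 \cup (I' \sem I)$ satisfies (i) $I$ is an inclusion-minimal cover of $\FF'$, and (ii) $\CC$ is exactly the family of $\FF'$-cores. By its very definition every edge of $I$ covers some $C \in \CC$, so $\FF'$, $\CC$, and $I$ fulfil all hypotheses of the lemma, and the assumed inequality~\eqref{e:2C} yields $\sum_{C \in \CC} d_I(C) \le \rho |\CC|$.

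Finally, I would bridge from $d_I$ to $d_J$ via the observation that no $C \in \CC$ is covered by $J_0 \cup (I' \sem I)$, whence $d_J(C)=d_I(C)$ for every $C \in \CC$; chaining gives $\sum_{C \in \CC} d_J(C)=\sum_{C \in \CC} d_I(C) \le \rho |\CC|$, the desired per-iteration bound. I do not expect a genuine obstacle here, since all the combinatorial difficulty is quarantined inside the hypothesis \eqref{e:2C}. The only subtlety, and the reason the preceding structural lemma is needed, is that one must not apply the hypothesis to the naive residual family $\FF^J$ but to $\FF'$ obtained by first deleting the surviving non-covering edges $I' \sem I$; this is precisely what guarantees that the core family of $\FF'$ coincides with $\CC$ and that $I$ is an inclusion-minimal, core-covering solution.
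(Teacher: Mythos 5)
Your proof is correct and takes essentially the same route as the paper: the same per-iteration duality reduction to $\sum_{C \in \CC} d_J(C) \le \rho|\CC|$, the same invocation of the structural lemma identifying $\FF'$ as the residual family of $\FF$ w.r.t.\ $J_0 \cup (I' \sem I)$ with $I$ an inclusion-minimal core-covering cover and $\CC$ its core family, and the same final bridge $d_J(C)=d_I(C)$. No gaps; nothing further is needed.
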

 
One can see that if an edge $e$ covers one of the sets $A \cap B, A \cup B, A\sem B,B \sem A$ 
then it also covers one of $A,B$. This implies the following.

\begin{lemma} \label{l:res}
If $\FF$ is pliable or $\ga$-pliable, then so is any residual family $\FF'$ of $\FF$.
\end{lemma}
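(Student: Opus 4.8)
The plan is to exploit two simple facts: the edge-covering observation stated just before the lemma, and a composition identity for residual families. First I would record that observation in its contrapositive form: if an edge $e$ covers neither $A$ nor $B$, then $e$ covers none of $A \cap B$, $A \cup B$, $A \sem B$, $B \sem A$. I would also note the identity $(\FF^J)^I = \FF^{J \cup I}$, which holds because a set $S \in \FF$ has $d_{J \cup I}(S)=0$ exactly when both $d_J(S)=0$ and $d_I(S)=0$; consequently, every residual family of a residual family of $\FF$ is again a residual family of $\FF$.

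For the pliable case, write $\FF'=\FF^J$ and fix $A,B \in \FF'$. Since $\FF' \subs \FF$ and $\empt,V \notin \FF$, also $\empt,V \notin \FF'$. By pliability of $\FF$, at least two of $A \cap B$, $A \cup B$, $A \sem B$, $B \sem A$ lie in $\FF$. Because $A,B \in \FF^J$, no edge of $J$ covers $A$ or $B$, so by the covering observation no edge of $J$ covers any of these four sets; hence each of them is uncovered by $J$, and in particular the (at least two) that lie in $\FF$ also lie in $\FF^J=\FF'$. This shows $\FF'$ is pliable.

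For the $\ga$-pliable case it remains to verify Property~$(\ga)$ for $\FF'$. Here the composition identity does the work: Property~$(\ga)$ of $\FF'$ concerns, for every edge set $I$, the residual family $(\FF')^I$, and $(\FF')^I=(\FF^J)^I=\FF^{J \cup I}$ is itself a residual family of $\FF$. Since Property~$(\ga)$ of $\FF$ quantifies over all its residual families, applying it with the edge set $J \cup I$ yields exactly the required statement for $\FF'$ and $I$. As $I$ is arbitrary, $\FF'$ satisfies Property~$(\ga)$, and together with the pliability just established, $\FF'$ is $\ga$-pliable.

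The argument presents no serious obstacle; the only points needing care are applying the contrapositive of the covering observation correctly, to transfer uncoveredness from $A,B$ to the four derived sets, and verifying the composition identity $(\FF^J)^I=\FF^{J \cup I}$, which collapses the $\ga$-case to a one-line appeal to the hypothesis.
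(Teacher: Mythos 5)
Your proof is correct and takes essentially the same route as the paper, which states the lemma without a written proof, deriving it solely from the preceding observation that an edge covering one of $A \cap B$, $A \cup B$, $A \sem B$, $B \sem A$ must cover one of $A,B$. Your contrapositive use of that observation for pliability, together with the composition identity $(\FF^J)^I=\FF^{J \cup I}$ handling Property~$(\ga)$, is exactly the routine verification the paper leaves to the reader.
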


Due to Lemmas \ref{l:2C'} and \ref{l:res},
to prove that the WGMV algorithm achieves approximation ratio $7$ for a $\ga$-pliable family $\FF$, 
it is sufficient to prove the following purely combinatorial statement. 

\begin{lemma} \label{l:2C}
Let $I$ be an inclusion minimal cover of a $\gamma$-pliable set family $\FF$ such that 
every edge in $I$ covers some $C \in \CC$. Then 
\begin{equation} \label{e:7}
\sum_{C \in \CC} d_I(C) \leq 7 |\CC| \ .
\end{equation}
\end{lemma}

For the proof of Lemma~\ref{l:2C} we need the following simple lemma.

\begin{lemma} \label{l:CS}
Let $\FF$ be a pliable set family and let $S \in \FF$ and $C \in \CC_\FF$ such that  $C \cap S \ne \empt$.
Then either $C \subs S$ or $C,S$ cross and the following holds: 
$S \sem C,S \cup C \in \FF$ and $C \cap S, C \sem S \notin \FF$.
Consequently, the members of $\CC_\FF$ are pairwise disjoint.  
\end{lemma}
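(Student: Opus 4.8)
The plan is to apply the pliability of $\FF$ directly to the pair $A=C$, $B=S$, and to exploit the minimality of the core $C$ to control which of the four sets $C\cap S$, $C\cup S$, $C\sem S$, $S\sem C$ can lie in $\FF$. The crucial observation is that both $C\cap S$ and $C\sem S$ are subsets of $C$; since $C$ is an $\FF$-core (inclusion-minimal in $\FF$), the \emph{only} subset of $C$ that belongs to $\FF$ is $C$ itself.

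First I would dispose of the nested case. If $C\sem S=\empt$ then $C\subs S$ and we are in the first alternative, so assume $C\sem S\ne\empt$. Then, using the observation above: since $\empt \ne C\cap S \subs C$, were $C\cap S\in\FF$ minimality would force $C\cap S=C$, i.e. $C\subs S$, contrary to assumption; hence $C\cap S\notin\FF$. Likewise $C\sem S\subs C$ with $C\sem S\ne C$ (because $C\cap S\ne\empt$), so $C\sem S\notin\FF$. Thus two of the four candidate sets are already excluded from $\FF$.

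Now pliability enters: at least two of $C\cap S,\,C\cup S,\,C\sem S,\,S\sem C$ lie in $\FF$, and since $C\cap S,C\sem S$ do not, the remaining two must: $C\cup S\in\FF$ and $S\sem C\in\FF$. It remains to check that $C$ and $S$ cross, i.e. that all four of $C\cap S$, $V\sem(C\cup S)$, $C\sem S$, $S\sem C$ are nonempty. Three are immediate: $C\cap S\ne\empt$ is a hypothesis, $C\sem S\ne\empt$ is the standing case assumption, and $S\sem C\ne\empt$ follows from $S\sem C\in\FF$ together with $\empt\notin\FF$. For the last, $C\cup S\in\FF$ and $V\notin\FF$ give $C\cup S\ne V$, i.e. $V\sem(C\cup S)\ne\empt$. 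This establishes the second alternative in full.

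Finally, for the disjointness of cores I would take two cores $C_1,C_2\in\CC_\FF$ with $C_1\cap C_2\ne\empt$ and apply the statement just proved in both orderings. Taking $C=C_1$, $S=C_2$ yields, in the crossing case, $C_1\sem C_2\notin\FF$; taking $C=C_2$, $S=C_1$ yields $C_1\sem C_2\in\FF$ — a contradiction. Hence $C_1,C_2$ cannot cross, so (by the dichotomy of the lemma) $C_1\subs C_2$, and minimality of $C_2$ forces $C_1=C_2$; distinct cores are therefore disjoint. I do not expect a genuine obstacle here: the whole argument is a careful bookkeeping of the minimality constraint, and the only point demanding attention is keeping the degenerate case $C\subs S$ cleanly separated from the crossing case before invoking pliability.
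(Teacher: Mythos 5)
Your proof is correct and takes essentially the same route as the paper: minimality of the core $C$ rules out $C\cap S$ and $C\sem S$ from $\FF$, pliability then forces $S\sem C,\,S\cup C\in\FF$, and the crossing follows since $\empt,V\notin\FF$. The only cosmetic differences are that you obtain $S\sem C\ne\empt$ from $S\sem C\in\FF$ (the paper notes directly that $S$ cannot be a subset of $C$) and that you write out the disjointness-of-cores argument, which the paper leaves implicit.
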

\begin{proof}
Suppose that $C$ is not a subset of $S$.
Then $C \sem S \ne \empt$. Also $S \sem C \ne \empt$, since $S$ cannot be a subset of $C$.
By the minimality of $C$ we must have $C \cap S, C \sem S \notin \FF$, 
thus since $\FF$ is pliable $S \sem C,S \cup C \in \FF$.
In particular, $S,C$ cross. 
\end{proof}

A set family $\LL$ is {\bf laminar} if any two sets in $\LL$ are disjoint or one of them contains the other.
Let $I$ be an inclusion minimal edge cover of a set family $\FF$. 
We say that a set $S_e \in \FF$ is a {\bf witness set} for an edge $e \in I$ if 
$e$ is the unique edge in $I$ that covers $S_e$, namely, if $\delta_I(S_e)=\{e\}$.
We say that $\LL \subs \FF$ is a {\bf witness family} for $I$ if 
$|\LL|=|I|$ and for every $e \in I$ there is a witness set $S_e \in \LL$.
By the minimality of $I$, there exists a witness family $\LL \subs \FF$.
The following was proved in BCGI \cite{BCGI}.

\begin{lemma}[BCGI \cite{BCGI}] \label{l:witness}
Let $I$ be an inclusion minimal cover of a pliable set family $\FF$. 
Then there exists a witness family $\LL \subs \FF$ for $I$ that is laminar.
\end{lemma}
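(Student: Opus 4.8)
The plan is to start from an arbitrary witness family and repeatedly \emph{uncross} pairs of witness sets that cross, until no crossing pair remains. By the minimality of $I$, for every $e \in I$ there is a witness set $S_e \in \FF$ with $\delta_I(S_e)=\{e\}$, so an initial witness family $\LL$ exists; the whole task is to transform it into a laminar one without destroying the witness property. I would measure progress by a potential such as the number of crossing pairs in $\LL$ (breaking ties, if needed, by $\sum_{S \in \LL}|S|$), and show that as long as some pair crosses, one can replace it by two sets of $\FF$ that strictly decrease the potential while still witnessing all of $I$.

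The engine of each step is the interaction between pliability and the witness property. Suppose $A=S_e$ and $B=S_f$ cross, where $e \neq f$. Pliability guarantees that at least two of the four derived sets $A \cap B, A \cup B, A \sem B, B \sem A$ lie in $\FF$. The crucial localization is the fact recorded just before Lemma~\ref{l:res}: any edge of $I$ covering one of these derived sets must cover $A$ or $B$; since $e$ is the unique edge covering $A$ and $f$ the unique edge covering $B$ (and $e \neq f$ forces $e$ not to cover $B$ and $f$ not to cover $A$), \emph{only $e$ and $f$ can cross a derived set}. Moreover any two of the four derived sets are pairwise laminar, being unions of the three atoms $A \cap B, A \sem B, B \sem A$. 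Hence replacing $\{A,B\}$ by any two in-$\FF$ derived sets removes the crossing of the pair itself, and by the standard uncrossing observation — that replacing a crossing pair by its meet/join, or by its two set-differences, creates no new crossing with any third set — the clean moves do not increase crossings with the rest of $\LL$.

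The work is then a short case analysis on where the endpoints of $e$ and $f$ sit among the atoms $A \cap B, A \sem B, B \sem A$ and the outside region. Recording, for each configuration, which derived sets are crossed by $e$, by $f$, by both, or by neither, I would read off $\delta_I(\cdot)$ for each derived set. In every configuration exactly one derived set is crossed by neither $e$ nor $f$ (hence uncovered by $I$, so it cannot lie in $\FF$), exactly one is crossed by both (hence $\delta_I$ of size two, so it cannot be a witness), and the remaining two are each crossed by a single distinct edge and together witness $\{e,f\}$. When pliability places this \emph{good pair} in $\FF$, the uncrossing is clean and the potential strictly drops.

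The main obstacle is the genuinely pliable case in which the two sets delivered by pliability are the doubly-covered set together with only one set of the good pair. Then no meet/join or difference move yields two valid witnesses simultaneously; the only local repair is to keep one original witness (say $A$, still witnessing $e$) and replace the other by the single good set (say $A \sem B$, witnessing $f$), which is nested in $A$ and so removes the $A$–$B$ crossing. Unlike the clean moves, such a one-sided swap can in principle create a new crossing of $A \sem B$ with some third set of $\LL$ that neither $A$ nor $B$ crossed, so the no-new-crossing guarantee no longer applies for free. I expect this to be the hard part: one must force progress anyway, either by choosing which crossing pair to process (for instance an inclusion-minimal or size-extremal one) so that the swap provably introduces no crossing, or by a refined potential that also decreases under the swap. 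This is exactly where pliability is weaker than uncrossability — for uncrossable families the good pair is always available — and it is the step I would spend the most care on.
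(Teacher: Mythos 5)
First, a framing remark: the paper does not prove this lemma at all --- it is imported verbatim from BCGI \cite{BCGI} --- so there is no in-paper proof to compare against, and your attempt must be judged on its own. Your setup is the correct and standard one, and it matches the known route: start from an arbitrary witness family (which exists by minimality of $I$), uncross, and use the localization fact that any edge covering one of $A \cap B, A \cup B, A \sem B, B \sem A$ covers $A$ or $B$, so only $e$ and $f$ can cover the derived sets. Your endpoint case analysis is also correct: since $e$ covers $A$ but not $B$ and $f$ covers $B$ but not $A$, in each of the four configurations exactly one derived set is covered by neither edge (hence not in $\FF$), exactly one is covered by both (hence disqualified as a witness), and the remaining two form a good pair --- always either meet and join, or the two differences --- and for these symmetric replacements the per-third-set comparison indeed shows no new crossings. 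So the clean case is genuinely fine.

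The gap is that the case you flag at the end is not a loose end --- it is the entire content of the lemma for pliable (as opposed to uncrossable) families, and neither progress measure you float survives it. Concretely, consider the configuration where $e$ runs from $A \sem B$ to the outside and $f$ has both ends in $A$ (one in $A \cap B$, one in $A \sem B$): the uncovered set is $B \sem A$, the doubly covered set is $A \sem B$, and if pliability supplies exactly $\{A \cup B,\ A \sem B\}$ then the only local repair is to replace $A$ by $A \cup B$ while keeping $B$. This forced move \emph{strictly increases} $\sum_{S \in \LL}|S|$, and it need not decrease the number of crossing pairs either: a third witness $S$ can cross $A \cup B$ without crossing $A$ (e.g.\ $A \subs S$ with $S$ crossing $B$ and $S \sem (A \cup B) \neq \empt$), so $(S, A \cup B)$ is created while $(S,B)$ persists and only $(A,B)$ is destroyed; with two such sets the count goes up. Symmetrically, the one-sided swap $B \to B \sem A$ you describe can create a crossing $(S, B \sem A)$ for $S \subs B$ crossing $A$, as you suspect. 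So crossing count, total size, and their lexicographic combinations in either order all fail as potentials, and your two escape hatches (a cleverly chosen pair to uncross, or a refined potential) are named but not executed. Supplying that termination/progress argument is exactly what the lengthy case analysis in \cite{BCGI} does, and why the pliable case is substantially harder than the uncrossable case of \cite{WGMV}, where the good pair is always available. As written, your proposal proves the lemma only in the clean sub-cases and is a plan, not a proof, for the rest.
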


Augment $\LL$ by the set $V$. 
A set $S \in \LL$ {\bf owns} a set $C$ if $S$ is the inclusion-minimal set in $\LL$ that contains $C$.
We assign colors to sets in $\LL$ as follows: a set is {\bf black} if it owns some core and is {\bf white} otherwise.

\begin{definition} \label{d:chain}
A sequence $\SS=(S_1, \ldots, S_\ell)$ of sets in $\LL \sem \{V\}$ is a {\bf white chain} 
if each of $S_1, \ldots, S_\ell$ is white and has exactly one child, where $S_{i-1}$ is the child of $S_i$, $i=2, \ldots,\ell$. 
We denote the child of $S_1$ by $S_0$. 
The edge set of $\SS$ is $I_{\SS}=\{a_0b_1, \ldots,a_\ell b_{\ell+1}\}$, 
where $a_ib_{i+1}$ is the unique edge in $I$ that covers $S_i$ and $a_i,b_i \in S_i$;
see Fig.~\ref{f:2} and note that possibly $a_i=b_i$. 
The {\bf weight} $w(e)$ of an edge $e \in I$ is the number of cores it covers.
The {\bf weight of a white chain $\SS$} is $w(\SS)=\sum_{C \in \CC} d_{I_\SS}(C)$;
note that $w(e) \le 2$ for any $e \in I$ and thus $w(\SS) \le 2(\ell+1)$.
\end{definition}

\begin{figure} \centering \includegraphics[scale=1.0]{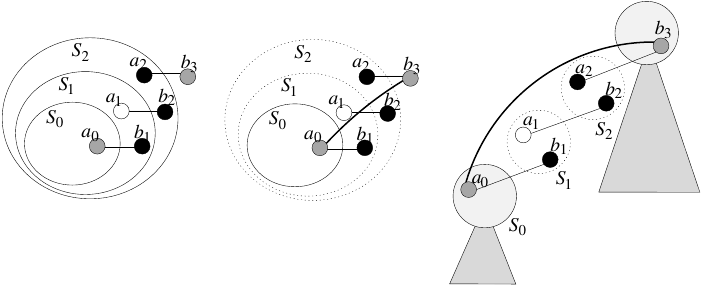}
\caption{Illustration to the shortcut of a white chain of length $\ell=2$.
Here, the black nodes belong to the same core $C$, 
the white node $a_1$ does not belong to any core.
The weight $w(e)$ of the shortcut edge $a_0b_3$ equals to $3$ plus the number of gray nodes that belong to some core.}
\label{f:3} \end{figure}

The laminar family $\LL$ can be represented by a rooted tree $\TT$ with node set $\LL$ and root $V$,
where the parent of $S$ in $\TT$ is the smallest set in $\LL$ that properly contains $S$.
The (unique) edge in $I$ that covers $S$ corresponds to the edge in $\TT$ from $S$ to its parent.
We use for nodes of $\TT$ the the same terminology as for sets in $\LL$;
specifically, nodes of $\TT$ are colored white and black accordingly, 
and a white chain in $\TT$ is a path from a node to its ancestor 
such that all its nodes are white and have degree $2$.

{\bf Short-cutting} a maximal white chain $\SS$ as in Definition~\ref{d:chain} means 
removing from $\LL$ the sets $S_1,\ldots, S_\ell$
and replacing in $I$ the $\ell+1$ edges in $I_\SS$ by the single edge $e=a_0b_{\ell+1}$ of weight $w(e)=w(\SS)$ 
that now has $S_0$ as the witness set; see Fig.~\ref{f:3}.
In the tree representation $\TT$ of $\LL$ this means that we replace the white chain -- the edges in $I_\SS$
and the nodes $S_1,\ldots , S_\ell$ by a new ``shortcut edge'' $e$ of weight $w(e)=w(\SS)$ between the sets that own $a_0$ and $b_\ell$.

Now let us consider the rooted weighted {\bf shortcut tree} $T=(B \cup W,I),w,r$ 
($B$ is the set of {\bf black nodes} and $W$ is the set of {\bf white nodes})
obtained from $\TT$ by short-cutting all maximal white chains. 
Let $L$ be the set of leaves of $T$. 
In what follows, note the following.
\begin{enumerate}
\item 
$w(I)=\sum_{C \in \CC} d_I(C)$, namely, $w(I)$ equals the left-hand side of (\ref{e:2C}).
\item
$|B|=|\CC|$; every core is owned by exactly one set in $\LL$, since $V \in \LL$ and since $\LL$ is laminar.
\item
In $T$, every leaf and every non-root node with exactly one child is black; 
we will call any tree that has this property a {\bf black-white} tree. 
In particular, $T$ has no white chain (a path of white nodes that have exactly one child) and thus $|I| \le 2|\CC|$.
\item
$|I|=|W|+|B|-1 \le 2|B|-1$ and $|W| \le |L| \le |B|$, and if $r$ is black or has at least $2$ children then $|W| \le |B|-1$. 
\end{enumerate}

If the original tree has no white chain of length $> \ell$ then $w(e) \le \ell$ for all $e \in I$,
and thus $\sum_{C \in \CC} d_I(C)=w(I) \le  2(\ell+1) \cdot 2 |\CC|$.
BCGI \cite{BCGI} showed that the maximum possible length of a white chain is $\ell=3$,
which gives the bound $w(I) \le 16|\CC|$. 
To improve this bound the following was proved in \cite{N-prel}.

\begin{lemma}[\cite{N-prel}] \label{l:9}
$w(\SS) \le 5$ for any white chain $\SS$ and if $w(\SS)=5$ then $S_0$ is black.
\end{lemma}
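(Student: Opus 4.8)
The plan is to carry out the whole argument inside the rooted laminar tree $\TT$, for a fixed maximal white chain $\SS=(S_1,\dots,S_\ell)$ with child $S_0$ and edge set $I_\SS=\{e_0,\dots,e_\ell\}$, where $e_i=a_ib_{i+1}$ is the unique edge of $I$ covering $S_i$. First I would localize the endpoints: since $e_i$ covers $S_i$ while $S_{i-1}$ is covered only by $e_{i-1}$, we get $a_0\in S_0$, $a_i,b_i\in S_i\sem S_{i-1}$ for $1\le i\le\ell$, and $b_{\ell+1}\notin S_\ell$. Write $R_0=S_0$, $R_i=S_i\sem S_{i-1}$, and $R_{\ell+1}=V\sem S_\ell$. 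Because each $S_i$ with $1\le i\le\ell$ is white and has $S_{i-1}$ as its only child, every core inside $S_i$ lies inside $S_{i-1}$, hence every core inside $S_\ell$ lies inside $S_0$. Consequently a chain endpoint lying outside $S_0$ can belong only to a core that \emph{crosses} the corresponding $S_i$ (Lemma~\ref{l:CS}).

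The engine of the proof, and the only place Property~$(\ga)$ is used, is the following containment claim: if a core $C$ crosses two chain sets $S_i\subs S_j$ with $0\le i<j\le\ell$, then $S_j\sem S_i\subs C$. Indeed, Property~$(\ga)$ gives that $D=S_j\sem(S_i\cup C)$ is empty or in $\FF$; a nonempty $D$ would contain a core, which lies inside the white set $S_j$ and hence inside $S_0\subs S_i$, contradicting $D\cap S_i=\empt$. A routine consequence, which is what I actually use, is that a core containing a point of $R_i$ and a point outside $S_i$ (with $i<\ell$) must contain all of $R_{i+1}$.

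For the counting, recall that $w(e_i)$ is the number of cores covered by $e_i$, so $w(\SS)=\sum_{i=0}^{\ell} w(e_i)$ and $w(e_i)\le 2$. Two consequences of the containment claim drive everything. (a) For $1\le i\le\ell-1$, if $a_i$ lies in a core, that core crosses $S_i$ and hence contains $R_{i+1}\ni b_{i+1}$; then both ends of $e_i$ lie in one core, so $e_i$ covers nothing and $w(e_i)\le1$. (b) For $1\le i\le\ell-1$, if $b_i$ lies in a core, that core contains $R_{i+1}$, so $a_{i+1}$ lies in a core as well. From (a), $w(\SS)\le w(e_0)+(\ell-1)+w(e_\ell)\le\ell+3$, which already gives $w(\SS)\le5$ for $\ell\le2$. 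Since the maximal white-chain length is $3$, only $\ell=3$ remains, where $w(\SS)=6$ would force $\big(w(e_0),w(e_1),w(e_2),w(e_3)\big)=(2,1,1,2)$; but $w(e_0)=2$ puts $b_1$ in a core, (b) then puts $a_2$ in a core, and (a) forces $w(e_2)=0$, a contradiction. Thus $w(\SS)\le5$ in all cases.

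Finally, for the equality statement I would run through the few weight profiles summing to $5$ and discard the inconsistent ones using (a) and (b); in each surviving profile these relations force $a_0$ to lie in a core, and that core to be contained in $S_0$ (were it to cross $S_0$ it would contain $b_1$ by the containment claim, killing the contribution of $e_0$). The step I expect to be the most delicate is upgrading ``$a_0$ lies in a core $C_0\subs S_0$'' to ``$S_0$ is black.'' For this I would invoke uniqueness of witness sets: if a child $S'$ of $S_0$ contained $a_0$, then $e_0=a_0b_1$ would cover $S'$ as well as $S_0$, making $e_0$ the witness edge of two distinct members of $\LL$ and contradicting the bijection between $I$ and the laminar witness family of Lemma~\ref{l:witness}. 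Hence no child of $S_0$ contains $a_0$, so none contains $C_0$, so $S_0$ is the inclusion-minimal member of $\LL$ containing $C_0$; that is, $S_0$ owns $C_0$ and is black. Throughout, the possibility $a_i=b_i$ only removes an endpoint and can only decrease $w(\SS)$, so it is absorbed by the same analysis.
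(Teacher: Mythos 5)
Your proposal is correct and takes essentially the same route as the paper: your Property~$(\ga)$ containment claim is exactly the paper's Lemma~\ref{l:Si}, your endpoint rules (a) and (b) are (slightly weaker forms of) Lemma~\ref{l:a} derived the same way, and the profile counting together with the witness-bijection argument that upgrades ``$a_0$ lies in a core owned below no child of $S_0$'' to ``$S_0$ is black'' matches how the paper (deferring details to \cite{N-prel}) obtains the bound via Lemma~\ref{l:cases}. The only notable deviations are harmless: you conclude $w(e_i)=0$ where the paper invokes the hypothesis that every edge covers a core to get $\ell=i$, and you lean on the BCGI bound $\ell\le 3$ for the length of white chains, which the paper also cites.
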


This immediately implies $w(I) \le 10$ (this is the bound that was explicitly stated in \cite{N-prel}), 
but in fact it is also easily implies that $w(I) \le 9|B|$.
To see this, let $t$ be the number of edges of weight $5$. 
Then $t \le |B|$ (since by Lemma~\ref{l:9} the tail of every edge of weight $5$ is black) and since $|W| \le |B|$ we get
\[
w(I) \le 5t+4(|W|+|B|-1-t) \le t+4(2|B|-1) \le 9|B|-4 \ .
\]

In the next section we will describe how to improve the analysis of the approximation ratio 
fro $9$ to $7$.

\section{A 7-approximation for pliable families (Theorem~\ref{t:1})} \label{s:t}

Let $T=(B \cup W,I),w$ be a shortcut tree with root $r$ and leaf set $L$.
For two paths $P,P'$ of $T$ we will write $P \prec P'$ 
if the nodes of $P$ are descendants of the nodes of $P'$.
We will say that an edge of $T$ is {\bf heavy} if it has weight $\ge 3$.
An ordered pair $(e,e')$ of heavy edges is a {\bf bad pair} if $e \prec e'$ 
and there is no black node between $e$ and $e'$. 
Similarly, given two maximal white chains $\SS,\SS'$ we will write $\SS \prec \SS'$ 
if in $\TT$ the nodes of $\SS$ are descendants of the nodes of $\SS'$, 
say that a maximal white chain $\SS$ is heavy if $w(\SS) \ge 3$, 
and say that a pair of heavy maximal white chains $(\SS,\SS')$ is a bad pair if $\SS \prec \SS'$ 
and there is no black set between $S_\ell$ and $S'_0$. 
The following lemma proves the desired bound in the case when there are no bad pairs.

\begin{lemma} \label{l:7}
If $T$ has no bad pair then $w(I) \le 7|B|-2$.
\end{lemma}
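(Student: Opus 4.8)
The plan is to bound the total weight $w(I) = \sum_{C \in \CC} d_I(C)$ by classifying edges according to weight and using the structure of the black-white shortcut tree $T$ together with the assumption that no bad pair exists. Recall from Lemma~\ref{l:9} that every edge satisfies $w(e) \le 5$, and that an edge of weight $5$ must have a black tail (its lower endpoint $S_0$ is black). Let me write $t_j$ for the number of edges of weight exactly $j$, so $w(I) = \sum_{j=1}^{5} j \cdot t_j$ and $|I| = \sum_j t_j = |W|+|B|-1 \le 2|B|-1$. The naive accounting in the excerpt gave $9|B|$ by charging each heavy edge up to weight $5$ and using $t_5 \le |B|$; to reach $7|B|-2$ I must extract extra savings from the no-bad-pair hypothesis, which restricts how many heavy edges can appear consecutively along a root-to-leaf path without an intervening black node.

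First I would make precise what the absence of bad pairs buys us. A bad pair $(e,e')$ with $e \prec e'$ is a pair of heavy edges (weight $\ge 3$) with no black node strictly between them. Forbidding bad pairs means that along any path in $T$, between any two heavy edges there must be a black node. Equivalently, if I look at the sequence of heavy edges encountered, consecutive heavy ones are separated by black nodes. I would use this to charge each heavy edge to a distinct black node: the idea is to associate to each heavy edge $e$ the nearest black node below it (its tail region), and argue via the no-bad-pair property that this assignment is close to injective, so the number of heavy edges is controlled by $|B|$. Combined with the fact that every non-heavy edge has weight $\le 2$, this should convert the crude bound into the sharper $7|B|-2$.

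The key steps, in order, would be: (1) set up the counting identity $w(I) = \sum_{e \in I} w(e)$ and split $I$ into light edges (weight $\le 2$) and heavy edges (weight $\ge 3$, hence $\in \{3,4,5\}$); (2) bound the contribution of light edges using $|I| \le 2|B|-1$, giving at most $2(2|B|-1)$ if every edge were light; (3) account for the excess $w(e)-2$ of each heavy edge, where the excess is $1$, $2$, or $3$ for weights $3,4,5$ respectively, so $w(I) \le 2|I| + \sum_{\text{heavy } e}(w(e)-2)$; (4) use the no-bad-pair hypothesis plus Lemma~\ref{l:9} (weight-$5$ edges have black tails) to bound this excess sum by charging heavy edges to black nodes. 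The heart of step (4) is a charging argument: since no two heavy edges can be consecutive without a black node between them, each black node can be ``responsible'' for only a bounded amount of heavy-edge excess, and I would want to show $\sum_{\text{heavy}}(w(e)-2) \le 3|B| - \text{(something)}$ so that the total lands at $2(2|B|-1) + 3|B| = 7|B|-2$, or rather a tighter split that yields exactly this constant.

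The main obstacle will be step~(4): making the charging from heavy edges to black nodes genuinely injective (or bounded multiplicity) despite the tree branching. A black node has one parent edge but possibly many child edges, and a heavy edge lying just above a black node could be ``claimed'' from below by that node while a heavy edge just below a branching black node must be handled per-child; I expect the delicate point is ensuring that when a black node sits between a heavy edge above and heavy edges below along several branches, the no-bad-pair condition forbids only same-path configurations, so I must be careful to charge along root-to-leaf structure rather than globally. I anticipate needing an auxiliary decomposition of $T$ into paths (for instance a leaf-to-root or heavy-path decomposition) so that on each path heavy edges are separated by black nodes, letting me bound the per-path excess by roughly the number of black nodes on that path plus the weight-$5$ bonus from Lemma~\ref{l:9}; stitching these per-path bounds together to get the clean global constant $7|B|-2$, and correctly tracking the $-2$ savings coming from the root and from $|W| \le |B|$, is where the real work lies.
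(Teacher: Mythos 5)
Your outline is precisely the paper's proof, and the step you single out as ``the real work'' --- the injectivity of the charging in your step (4) --- is in fact immediate; no heavy-path or leaf-to-root decomposition is needed. Define the charge exactly as you propose: for a heavy edge $e$, walk down from its lower endpoint until the first black node $b_e$ is reached. Such a node exists because every leaf of $T$ is black, and by construction no black node lies strictly between $e$ and $b_e$. Your worry about branching dissolves because the charge goes \emph{downward}: if two heavy edges $e \neq e'$ were charged to the same black node $b$, then both are ancestors of $b$, and the ancestors of a fixed node in a rooted tree form a chain, so $e$ and $e'$ are comparable, say $e \prec e'$. The unique tree path from $e'$ down to $b$ passes through $e$, and all its internal nodes are white by the choice of $b_{e'}=b$; in particular there is no black node between $e$ and $e'$, so $(e,e')$ is a bad pair, a contradiction. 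Hence the charging is injective, $t \le |B|$ for the number $t$ of heavy edges, and your accounting closes the proof: $w(I) \le 2|I| + \sum_{e\,\mathrm{heavy}}(w(e)-2) \le 2(2|B|-1)+3t \le 7|B|-2$, using $w(e) \le 5$ from Lemma~\ref{l:9} and $|I| = |W|+|B|-1 \le 2|B|-1$. One small correction: the black-tail property of weight-$5$ edges from Lemma~\ref{l:9}, which you invoke in step (4), plays no role in this lemma (it is used elsewhere); the only inputs are $w(e)\le 5$, the edge count, the fact that all leaves are black, and the no-bad-pair hypothesis.
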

\begin{proof}
Let $t$ be the number of heavy edges. 
There are exactly $|I|-t=|W|+|B|-1-t$ non-heavy edges,
hence since $|W| \le |B|$ we have
$$
w(I) \le 5t+2(|W|+|B|-1-t) =3t+2(|W|+|B|-1) \le 3t+2(2|B|-1) \ .
$$
Since all leaves are black and since there is no bad pair, 
we can assign to every heavy edge the closest descendant black node,
and no black node will be assigned twice. Consequently, $t \le |B|$.
Thus we get $w(I) \le 3t+2(2|B|-1) \le 7|B|-2$,
concluding the proof.
\end{proof}

We will prove the following.

\begin{lemma} \label{l:pliable}
Let $(e,e')$ be a bad pair. Then: 
\begin{enumerate}
\item
$w(e)+w(e') \le 7$.
\item
There is no heavy edge on the path between $e$ and $e'$.
\end{enumerate}
\end{lemma}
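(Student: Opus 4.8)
The plan is to translate the bad pair $(e,e')$ back into the original tree $\TT$ as two heavy maximal white chains $\SS \prec \SS'$ with no black set between $S_\ell$ and $S'_0$, and then to exploit the geometry of the white interface through which both chains pass. Since $\SS,\SS'$ are maximal and no black set separates them, every set on the tree path from $S_\ell$ up to $S'_0$ is white; in the tightest case $S'_0$ is exactly the parent of $S_\ell$. The crucial observation is that both the top edge $a_\ell b_{\ell+1}$ of $\SS$ and the bottom edge $a'_0 b'_1$ of $\SS'$ have an endpoint inside $S'_0$: since $S_\ell$ is the witness set of $a_\ell b_{\ell+1}$ we get $b_{\ell+1} \in S'_0 \sem S_\ell$, and since $S'_0$ is the witness set of $a'_0 b'_1$ we get $a'_0 \in S'_0$. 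Thus the two chains ``meet'' inside the single white set $S'_0$, and the bound must come from the interaction of these two interface edges.

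First I would record the easy half. Because $S'_0$ is white, Lemma~\ref{l:9} forbids $w(\SS')=5$, so $w(e')=w(\SS') \le 4$; combined with $w(e)=w(\SS) \le 5$ this already yields $w(e)+w(e') \le 9$. To reach $7$ I must save two further units, and this is exactly where pliability enters.

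The main step is a core-counting argument at the interface. Writing $w(\SS)$ and $w(\SS')$ as sums over cores $C \in \CC$ of the number of chain edges crossing $C$, I would isolate the contribution of the two interface edges $a_\ell b_{\ell+1}$ and $a'_0 b'_1$, each of which can a priori cross two cores (one at each endpoint). The cores these edges cross that lie inside $S'_0$ are owned by descendants of $S'_0$; because the path from $S_\ell$ to $S'_0$ carries no black set, Lemma~\ref{l:CS} applied to $S_\ell$ and to $S'_0$ (forcing $C \cap S_\ell, C\sem S_\ell \notin \FF$ and the analogous facts for $S'_0$), together with Property $(\gamma)$ applied to the nested witness sets $S_\ell \subset S'_0$ and a core crossing both, pins down how such cores sit relative to $S_\ell \subset S'_0$. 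I would use this to show that the four potential interface contributions collapse: the core touched by $b_{\ell+1}$ and the core touched by $a'_0$ cannot both count independently, so the joint interface contribution drops by at least two below the worst case, giving $w(e)+w(e') \le 7$. I expect this bookkeeping --- deciding exactly which cores each of $a_\ell b_{\ell+1}, a'_0 b'_1$ crosses and ruling out independent double contributions via Property $(\gamma)$ --- to be the main obstacle, since it needs a careful case analysis of a core's position relative to $S_\ell \subset S'_0$ and to the (possibly black) siblings of $S_\ell$ under $S'_0$.

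For part~2 I would argue by contradiction. Suppose a heavy edge $e''$ lies strictly between $e$ and $e'$; as no black set separates $e$ from $e'$, both $(e,e'')$ and $(e'',e')$ are bad pairs, so $e''$ is a heavy white chain $\SS''$ squeezed between two white interface sets, namely the parent of $S_\ell$ and the parent of $S''_\ell$, both white. The structural description established in part~1 shows that the extra weight of a heavy chain is anchored at the unique white interface set where it meets its neighbour, and that this interface supports only a bounded, shared pool of cores. Having \emph{both} of its interfaces white and shared with two distinct heavy neighbours would force $\SS''$ to draw its weight twice from incompatible configurations, contradicting $w(\SS'') \ge 3$. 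Hence no heavy edge can lie on the path strictly between $e$ and $e'$.
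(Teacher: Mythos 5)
Your setup is sound (reduction to heavy maximal white chains, $b_{\ell+1},a'_0 \in S'_0$, and the bound $w(\SS')\le 4$ via Lemma~\ref{l:9} since $S'_0$ is white), but the heart of both parts is asserted rather than proved, and the place where you propose to apply Property~$(\gamma)$ is the wrong one. You localize everything at the interface set $S'_0$ and plan to apply $(\gamma)$ to the nested pair $S_\ell \subset S'_0$; this cannot work as stated, because $S'_0$ may have several children (white branching nodes of the shortcut tree are allowed between $e$ and $e'$), so the set $D=S'_0\sem(S_\ell\cup C)$ being in $\FF$ yields no contradiction there. The paper's mechanism is different and sits one level higher: a core $C$ touching one of $a_1,b_1,\ldots,a_\ell,b_\ell$ of the \emph{lower} chain (such a node exists by Lemma~\ref{l:cases} since $\SS$ is heavy) is not owned by any of the white sets up to and including $\SS'$, hence crosses every set of the upper chain; applying $(\gamma)$ at $S'_1$, whose child $S'_0$ is \emph{unique} (Lemma~\ref{l:Si}), gives $S'_1\sem S'_0\subseteq C$, so $a'_1\in C$, whence $\ell'=1$ by Lemma~\ref{l:a} and $a'_0\notin U$ (else $S'_0$ would be black), i.e.\ $w(\SS')=3$ exactly. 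The bound $w(\SS)\le 4$ comes from the same annulus: if $a_\ell$ and $b_{\ell+1}$ lay in distinct cores $C,C'$, both would contain $S'_1\sem S'_0$, contradicting disjointness of cores. Your ``four interface contributions collapse by two'' is exactly the step you flag as the main obstacle, and your proposed bookkeeping around $a_\ell b_{\ell+1}$ and $a'_0b'_1$ inside $S'_0$ does not produce it.

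The gap is equally concrete in part~2. You correctly note that $(\SS,\SS'')$ and $(\SS'',\SS')$ are both bad pairs, but the contradiction you invoke (``drawing weight twice from incompatible configurations'') has no proved content behind it. What makes the paper's argument close is that the two halves of its key lemma (Lemma~\ref{l:12}) impose \emph{complementary} conditions on a chain depending on whether it plays the upper or the lower role: as an upper chain, $\SS''$ must satisfy $\ell''=1$ and $a''_0\notin U$; as a lower chain with $\ell''=1$, it must satisfy $a''_0\in U$ (else it is not heavy). These two statements about the single node $a''_0$ contradict each other directly. Without first deriving the asymmetric structural facts of part~1 (in particular $a'_0\notin U$ for the upper chain and $a_0\in U$ when the lower chain has $\ell=1$), your part~2 cannot be completed.
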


Note that Lemma~\ref{l:pliable} does not imply that the bad pairs are pairwise disjoint;
if $(e,e')$ is a bad pair then $(e,e')$ is the unique bad pair that contains $e$, 
but there can be many bad pairs $(e_1,e'), \ldots, (e_q,e')$ that contain $e'$. 
Still, Theorem~\ref{t:1} easily follows from Lemmas \ref{l:pliable} and \ref{l:7} 
by a simple manipulation of weights.
For every edge $e'$ that appears as an upper edge in some bad pair, choose one such bad pair 
$(e,e')$ and change the weights of $e'$ to be $2$ and the weight of $e$ to be $w(e)+w(e')-2 \le 5$. 
This operation does not change the maximum weigh nor the total weight, and after it there are no bad pairs,
so  there is a black node between any two ancestor-descendant heavy edges. 
Theorem~\ref{t:1} now follows from Lemma~\ref{l:7}. 
Furthermore, the proof shows that if the bound $w(I) \le 7|B|$ is asymptotically tight, 
then there exists a tight example without bad pairs. 

In the rest of this section we prove Lemma~\ref{l:pliable}. 
Note that in terms of white chains Lemma~\ref{l:pliable} says that 
if $(\SS,\SS')$ is a bad pair of white chains then: 
\begin{enumerate}
\item
$w(\SS)+w(\SS') \le 7$.
\item
There is no heavy maximal white chain between $\SS$ and $\SS'$.
\end{enumerate}

For the proof we will need the following property of white sets.

\begin{lemma} \label{l:Si}
Let $S_{i-1}$ be a child of a white set $S_i \in \LL$ and let $C \in \CC$. 
If $C \cap S_{i-1}$ and $C \sem S_{i-1}$ are both non-empty then $C$ crosses both $S_i,S_{i-1}$.
Furthermore, if $S_{i-1}$ is the unique child of $S_i$ then $S_i \sem S_{i-1} \subset C$.
\end{lemma}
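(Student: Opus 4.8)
The plan is to treat the two assertions separately: the crossing statement will come from Lemma~\ref{l:CS} together with laminarity of $\LL$, while the containment statement will come from Property~$(\gamma)$ together with the colouring rule.

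First I would show that $C$ crosses $S_{i-1}$. Since $S_{i-1}\in\LL\subs\FF$ and $C\cap S_{i-1}\neq\empt$, Lemma~\ref{l:CS} gives that either $C\subs S_{i-1}$ or $C,S_{i-1}$ cross. The former is impossible because $C\sem S_{i-1}\neq\empt$, so $C,S_{i-1}$ cross. For $S_i$ (which lies in $\FF$, as on a white chain $S_i\in\LL\sem\{V\}$), note $C\cap S_i\supseteq C\cap S_{i-1}\neq\empt$, so Lemma~\ref{l:CS} again yields $C\subs S_i$ or that $C,S_i$ cross. To rule out $C\subs S_i$ I would use that $S_i$ is white: then $S_i$ owns no core, so the inclusion-minimal member of $\LL$ containing $C$ is a proper subset of $S_i$, hence lies inside some child $S'$ of $S_i$, giving $C\subs S'$. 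As the children of $S_i$ are pairwise disjoint and $C\cap S_{i-1}\neq\empt$, we must have $S'=S_{i-1}$, so $C\subs S_{i-1}$, contradicting $C\sem S_{i-1}\neq\empt$. Hence $C,S_i$ cross. Note this part never uses that $S_{i-1}$ is the unique child.

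The key structural observation for the second part is that when $S_i$ is white and $S_{i-1}$ is its \emph{unique} child, every core contained in $S_i$ is in fact contained in $S_{i-1}$: any core $C'\subs S_i$ is owned by a proper member of $\LL$ inside $S_i$ (since $S_i$ owns no core), and this owner lies inside the unique child $S_{i-1}$, so $C'\subs S_{i-1}$.

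Finally I would establish $S_i\sem S_{i-1}\subs C$, equivalently $D:=S_i\sem(S_{i-1}\cup C)=\empt$. Applying Property~$(\gamma)$ with the empty edge set to the pair $S_{i-1}\subset S_i$ (both in $\FF$) and the core $C$, which crosses each of them by the first part, gives that $D$ is empty or $D\in\FF$. If $D\in\FF$, then $D$ contains some core $C'\in\CC$; since $C'\subs D\subs S_i$, the structural observation forces $C'\subs S_{i-1}$, contradicting $C'\subs D$ together with $D\cap S_{i-1}=\empt$. Therefore $D=\empt$, as required. The two invocations of Lemma~\ref{l:CS} are routine; the crux, and the only place the strengthening from pliable to $\ga$-pliable is needed, is this last step, where Property~$(\gamma)$ turns the leftover set $D$ into a member of $\FF$ whose forced core would have to sit simultaneously inside and outside $S_{i-1}$.
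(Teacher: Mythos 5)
Your proof is correct and follows essentially the same route as the paper's: Lemma~\ref{l:CS} combined with the whiteness of $S_i$ (ruling out $C\subs S_i$ via ownership and laminarity) gives the crossing claim, and Property~$(\gamma)$ applied to $D=S_i\sem(S_{i-1}\cup C)$ yields a core inside $D$ whose owner would have to lie in the unique child $S_{i-1}$, a contradiction. You merely spell out in more detail the ownership argument that the paper compresses into ``$S_i$ is white (and thus doesn't own $C$)'' and ``owned by a descendant of $S_i$ disjoint to $S_{i-1}$.''
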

\begin{proof}
Since $S_i$ is white (and thus doesn't own $C$), $C \sem S_i \neq \empt$. 
Thus $C$ crosses both $S_{i-1},S_i$, by Lemma~\ref{l:CS}, 
Now suppose that $S_{i-1}$ is the unique child of $S_i$.
Let $D=S_i \sem (C \cup S_{i-1})$.
By property $(\gamma)$  either $D=\empt$ or $D \in \FF$.
If $D=\empt$ then we are done. Else, $D \in \FF$ and thus $D$ contains a core $C' \in \CC$,
that is owned by a descendant of $S_i$ disjoint to $S_{i-1}$. 
This contradicts that $S_i$ has a unique child. 
\end{proof}

Let $\SS$ be a maximal white chain as in Definition~\ref{d:chain} and let $C \in \CC$. 
The following is proved in \cite{N-prel}; we provide a proof for completeness of exposition.

\begin{lemma} [\cite{N-prel}] \label{l:a}
If $S_0 \cap C \ne \empt$ then either $a_1,b_1 \in C$ or a descendant of $S_0$ or $S_0$ owns $C$;
consequently, if $a_0 \in C$ then $S_0$ owns $C$. For $i \ge 1$ the following holds: 
\begin{itemize}
\item[{\em (i)}]
If $a_i \in C$ then $\ell=i$. 
\item[{\em (ii)}]
If $a_i \notin C$ and $b_i \in C$ then $\ell \in \{i,i+1\}$;
furthermore, if $\ell=i+1$ then $a_{i+1},b_{i+1} \in C$.
\end{itemize}
\end{lemma}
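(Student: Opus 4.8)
The plan is to first pin down two structural facts about a white chain that are not stated explicitly, and then read off all four assertions from them together with a single counting observation about cores. The first preliminary I would establish is that, for every $i \ge 1$, both endpoints $a_i,b_i$ lie in the \emph{gray region} $S_i \sem S_{i-1}$, and not merely in $S_i$. For $b_i$ this is immediate, since $b_i$ is the far endpoint of the edge $a_{i-1}b_i$ that covers $S_{i-1}$, so $b_i \notin S_{i-1}$. For $a_i$ I would argue by uniqueness of witness edges: if $a_i \in S_{i-1}$, then since $b_{i+1} \notin S_i \supseteq S_{i-1}$, the edge $a_ib_{i+1}$ would also cover $S_{i-1}$; but $a_{i-1}b_i$ is the unique edge of $I$ covering the witness set $S_{i-1}$, and $a_ib_{i+1}\ne a_{i-1}b_i$, a contradiction. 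The second preliminary is the upward propagation supplied by Lemma~\ref{l:Si}: once a core $C$ meets the unique child $S_{i-1}$ of a white set $S_i$ without being contained in it, $C$ crosses $S_i$ and the whole gray region $S_i \sem S_{i-1}$ lies in $C$; iterating drives $C$ to cross $S_{i-1},S_i,S_{i+1},\ldots$ up the chain, dragging each successive gray region into $C$. The single counting observation I would isolate is that no edge of $I$ can have both endpoints in one core: since the cores are pairwise disjoint (Lemma~\ref{l:CS}), such an edge would cover no core, contradicting the standing hypothesis that every edge of $I$ covers some $C \in \CC$.

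With these in hand the first assertion is short. If $C \subs S_0$, then its owner, the inclusion-minimal member of $\LL$ containing it, is $S_0$ or a descendant, giving the second alternative. Otherwise $C \cap S_0 \ne \empt$ and $C \not\subs S_0$, so by Lemma~\ref{l:CS} they cross; applying the propagation at $S_1$ yields $S_1 \sem S_0 \subs C$, and since $a_1,b_1 \in S_1 \sem S_0$ by the gray-region fact, we obtain $a_1,b_1 \in C$, the first alternative. For the final clause, suppose $a_0 \in C$. If $C$ crossed $S_0$, propagation would force $b_1 \in S_1 \sem S_0 \subs C$, so $a_0 b_1$ would have both endpoints in $C$, which the counting observation forbids; hence $C \subs S_0$. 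Finally, a proper descendant $S' \subset S_0$ cannot own $C$, for then $a_0 \in C \subs S'$ while $b_1 \notin S_0 \supseteq S'$, so $a_0 b_1$ would cover the witness set $S'$ as well as $S_0$, contradicting that each witness set has a unique covering edge. Thus $S_0$ owns $C$.

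For (i), assume $a_i \in C$ with $i \ge 1$. Since $a_i \in S_i \sem S_{i-1}$, $C$ is not contained in $S_i$ (whiteness of $S_i$ together with its unique child would otherwise force $C \subs S_{i-1}$, excluding $a_i$), so $C$ crosses $S_i$. If $i < \ell$, propagation at $S_{i+1}$ puts $b_{i+1} \in S_{i+1}\sem S_i \subs C$, whence $a_ib_{i+1}$ has both endpoints in $C$, which is impossible; hence $\ell=i$. For (ii), from $b_i \in C$ and $a_i \notin C$ the same reasoning shows $C$ crosses $S_i$, so $\ell \ge i$. If $\ell \ge i+1$, propagation at $S_{i+1}$ gives $a_{i+1},b_{i+1}\in S_{i+1}\sem S_i \subs C$, which is exactly the stated conclusion in the case $\ell=i+1$; and if instead $\ell \ge i+2$, then $a_{i+1}\in C$ with $i+1<\ell$ would contradict part (i). Therefore $\ell \in \{i,i+1\}$.

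The step I expect to be the main obstacle is the hidden gray-region fact $a_i \in S_i \sem S_{i-1}$: it is what lets the crossing case of the first assertion actually deliver $a_1 \in C$ (rather than only $b_1 \in C$), and it is what makes the three propagation-plus-``both endpoints in a core'' contradictions in (i), (ii), and the final clause fire at the correct endpoints. Getting the uniqueness-of-witness-edges argument for this fact right, and carefully tracking which endpoint sits in which gray region as a core is propagated up the chain, is where the care is needed; the remaining manipulations are then routine.
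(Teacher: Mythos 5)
Your proof is correct and follows essentially the same route as the paper's: repeated application of Lemma~\ref{l:Si} to propagate the core $C$ up the white chain, combined with the observation that an edge with both ends in one core covers no core (since cores are pairwise disjoint by Lemma~\ref{l:CS}), contradicting the standing hypothesis on $I$. The only difference is that you spell out two facts the paper leaves implicit---that $a_i,b_i \in S_i \sem S_{i-1}$, and that a proper descendant of $S_0$ cannot own $C$ when $a_0 \in C$, both derived from the uniqueness of witness edges---which is a careful filling-in of details rather than a different approach.
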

\begin{proof}
If $S_0 \cap C \ne \empt$ and $C$ is not owned by $S_0$ or by a descendant of $S_0$,
then by Lemma~\ref{l:Si}, $\{a_1,b_1\} \subs S_1 \sem S_0 \subset C$.
Now assume that $a_0 \in C$ and suppose to the contrary that $S_0$ does not own $C$. 
Then $C \sem S_0 \neq \empt$. By Lemma~\ref{l:Si}, $S_1 \sem S_0 \subset C$, hence $b_1 \in C$. 
Thus the edge $a_0b_1$ has both ends in $C$, contradicting the assumption the every edge in $I$ covers some $C \in \CC$.

We prove (i). If $S_{i+1}$ exists then by Lemma~\ref{l:Si} $b_{i+1} \in C$,
contradicting the assumption the every edge in $I$ covers some $C \in \CC$.

We prove (ii). If $S_{i+1}$ exists then by Lemma~\ref{l:Si} $a_{i+1}, b_{i+1} \in C$, and $\ell=i+1$ follows from part (i). 
\end{proof}

Let $U=\bigcup_{C \in \CC}C$ be the set of those nodes that belong to some core.
Using Lemma~\ref{l:a}, we obtain the following partial characterization of heavy maximal white chains.

\begin{lemma} \label{l:cases}
If $\SS$ is a heavy maximal white chain then exactly one of the following holds.
\begin{enumerate}
\item
$\ell=1$ and at least $3$ among $a_0,b_2,a_1,b_2$ are in $U$. 
\item
$\ell=2$, $a_1 \notin U$, and one of the following holds:
\begin{enumerate}
\item
$b_1,b_2,a_2 \in C$ for some $C \in \CC$.
\item
$b_1 \notin U$, $a_0,b_2 \in U$, and at least one of $a_2,b_3$ is in $U$.
\end{enumerate}
\item
$\ell=3$, $a_1,b_1 \notin U$, and $b_2,b_3,a_3 \in C$ for some $C \in \CC$.
\end{enumerate}
\end{lemma}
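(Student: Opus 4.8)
The plan is to read off everything from two inputs: the standing hypothesis that every edge of $I$ covers at least one core (so $w(e_i)\ge 1$ for each chain edge $e_i=a_ib_{i+1}$), and the membership restrictions of Lemma~\ref{l:a}; no fresh use of Property~$(\gamma)$ or of Lemma~\ref{l:Si} is needed, since those are already packaged into Lemma~\ref{l:a}. First I would record two elementary consequences of $w(e_i)\ge 1$: at least one endpoint of $e_i$ lies in $U$, and the two endpoints of $e_i$ never lie in a common core (that would make $e_i$ cover no core). Hence whenever one endpoint of an edge is known to miss $U$, the other must lie in $U$.

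Next I would extract the forced non-memberships from Lemma~\ref{l:a}: part~(i) gives $a_i\notin U$ for $1\le i\le \ell-1$, and parts~(i)--(ii) together give $b_i\notin U$ for $1\le i\le \ell-2$ (from $b_i\in C$, part~(i) gives $\ell=i$ if $a_i\in C$ and part~(ii) gives $\ell\le i+1$ if $a_i\notin C$, so $b_i\in U$ would force $i\ge\ell-1$). Combining with the previous paragraph, for each middle edge $e_i$ with $1\le i\le\ell-1$ the relation $a_i\notin U$ forces $b_{i+1}\in U$; in particular $b_2\in U$ whenever $\ell\ge2$. Since $b_2\in U$ forces $\ell\le3$ (apply Lemma~\ref{l:a} to the core containing $b_2$), this already pins the length to $\ell\in\{1,2,3\}$, reproving the BCGI bound along the way.

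The case analysis is then short. For $\ell=1$ no endpoint is forced out of $U$ and no edge is wasted, so $w(\SS)$ equals the number of the four endpoints $a_0,b_1,a_1,b_2$ lying in $U$ (counting labels, to accommodate the possibility $a_1=b_1$); heaviness is precisely that at least three of them lie in $U$, which is Case~1. For $\ell=2$ we have $a_1\notin U$ and $b_2\in U$. If $b_1\in U$, then since $a_1\notin U$ and $\ell=1+1$, Lemma~\ref{l:a}(ii) with $i=1$ promotes this to $b_1,b_2,a_2\in C$ for one core $C$, which is Case~2(a). If $b_1\notin U$, then $e_0$ covering a core forces $a_0\in U$ and $e_2$ covering a core forces $a_2\in U$ or $b_3\in U$, which is Case~2(b). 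For $\ell=3$ we have $a_1,a_2,b_1\notin U$ and $b_2,b_3\in U$, and Lemma~\ref{l:a}(ii) with $i=2$ (using $a_2\notin U$, $\ell=2+1$) promotes $b_2\in U$ to $b_2,b_3,a_3\in C$, which is Case~3. Exactness is immediate, as the cases are separated by $\ell$ and, within $\ell=2$, by whether $b_1\in U$.

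I expect the difficulty to lie not in any single step but in getting the bookkeeping exactly right: the promotion step of Lemma~\ref{l:a}(ii) must be invoked at the correct index, and heaviness of a chain must be traced to the right edge. The conceptual crux is that it is the hypothesis ``every edge covers a core'', and not any crossing argument, that excludes the otherwise plausible heavy configuration in which some middle edge covers no core while the top edge covers two; this is what makes $b_2\in U$ (and hence the whole pattern) unavoidable. The only genuinely fiddly point is the multiplicity issue when $a_i=b_i$, which I would dispose of by counting endpoint labels rather than vertices throughout.
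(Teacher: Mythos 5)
Your proof is correct and takes essentially the same route as the paper: both derive the classification from Lemma~\ref{l:a} together with the hypothesis that every edge of $I$ covers some core (hence has an end in $U$, and never both ends in one core), splitting on $\ell$ and, within $\ell=2$, on whether $b_1 \in U$, with the same ``promotion'' applications of Lemma~\ref{l:a}(ii) at $i=1$ and $i=2$. Your version is merely a bit more systematic (deriving $a_i \notin U$ for $i \le \ell-1$ and $b_i \notin U$ for $i \le \ell-2$ in general, thereby reproving $\ell \le 3$), and it correctly reads the list in Case~1 as $a_0,b_1,a_1,b_2$, fixing the paper's typo ($b_2$ listed twice).
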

\begin{proof}
The case $\ell=1$ is obvious.
If $\ell=2$ then $a_1 \notin U$, by Lemma~\ref{l:a}. 
If $b_1 \in C$ for some $C \in \CC$ then by Lemma~\ref{l:a} $a_2,b_2 \in C$ and we arrive at case (2a).
Else, $b_1 \notin U$ and since $a_1 \notin U$ we must have $a_0,b_2 \in U$ (since every edge has at least one end in $U$), 
and we arrive at case (2b).

If $a_1,b_1,a_2 \notin U$, then $b_2 \in U$ (since the edge $a_1b_2$ has an end in $U$), 
which by Lemma~\ref{l:a} implies $\ell=3$ and $b_2,b_3,a_3 \in C$ for some $C \in \CC$.
\end{proof}

\begin{figure} \centering \includegraphics[scale=1.1]{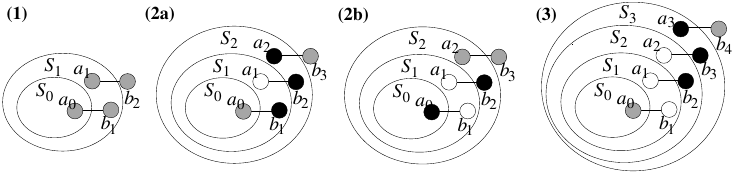}
\caption{The cases in Lemma~\ref{l:cases}.
Black nodes are in $U$, white nodes are not in $U$, while gray nodes may or may not be in $U$.}
\label{f:2} \end{figure}

\begin{figure} \centering \includegraphics[scale=0.95]{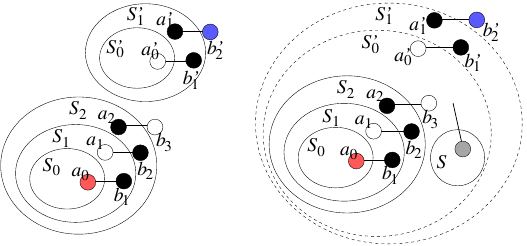}
\caption{Illustration of a bad pair $(\SS,\SS')$ with $w(\SS)+w(\SS')=7$. 
Blue and red nodes belong to distinct cores, while all black nodes belong to the same core. }
\label{f:bad} \end{figure}

\begin{lemma} \label{l:12}
Let $\SS=(S_0, \ldots,S_\ell)$ and $\SS'=(S'_0, \ldots,S_{\ell'})$ be two heavy white chains with edges
$a_0b_1, \ldots, a_\ell b_{\ell+1}$ and  $a'_0b'_1, \ldots, a'_{\ell'} b'_{\ell'+1}$, respectively. 
If $\SS \prec \SS'$ and there is no black set between $\SS$ and $\SS'$ then (see Fig.~\ref{f:bad}):  
\begin{enumerate}
\item
$w(\SS')=3$, $\ell'=1$, and $a'_0 \notin U$.
\item
$w(\SS) \le 4$ and if $\ell =1$ then $a_0 \in U$.
\end{enumerate}
\end{lemma}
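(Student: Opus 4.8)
The plan is to control both chains by examining the cores that meet the white ``interface'' between them, using core--disjointness (Lemma~\ref{l:CS}), the propagation of a core along a white chain (Lemma~\ref{l:a}), and the gap--absorption property (Lemma~\ref{l:Si}). I first record the accounting I will use repeatedly: since the cores are pairwise disjoint, every edge $a_ib_{i+1}$ of a chain has weight $1$ or $2$ --- weight $2$ exactly when its two ends lie in two \emph{distinct} cores, and weight $1$ when exactly one end lies in $U$ (both ends in one core is impossible, as each edge of $I$ must cover a core). I also note the geometry: since $S'_1$ has the single child $S'_0$ and $S_\ell$ is a proper descendant of $S'_1$, we get $S_\ell\subs S'_0$, and this inclusion is proper (else $S'_0=S_\ell$ would be white with one child, contradicting maximality of $\SS'$). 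Thus $S'_0$ lies strictly between $\SS$ and the chain $\SS'$.

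First I would prove $a'_0\notin U$. If $a'_0\in U$, then by the ``consequently'' clause of Lemma~\ref{l:a} the set $S'_0$ owns a core and is black; but $S'_0$ is a black set sitting strictly between $S_\ell$ and $\SS'$, contradicting the bad--pair hypothesis. Hence $a'_0\notin U$, $S'_0$ is white, and the bottom edge $a'_0b'_1$ has weight $1$ with $b'_1\in U$.

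Next I would pin down $\SS'$. By Lemma~\ref{l:cases} the heavy chain $\SS'$ is of type 1, 2a, 2b, or 3. Type 2b requires $a'_0\in U$ and is therefore excluded, and type 3 forces $b'_1\notin U$, so its bottom edge would cover no core --- excluded. The whole difficulty is to rule out type 2a, i.e.\ $\ell'=2$ with one core $C\supseteq\{b'_1,b'_2,a'_2\}$ and $a'_0,a'_1\notin U$. Here $C$ crosses $S'_1$ (it meets $S'_1$ at $b'_1$ but contains $b'_2\in S'_2\sem S'_1$, so $C\not\subs S'_1$), and, applying Lemma~\ref{l:Si} and property $(\gamma)$ up the chain, $C$ is forced to cross $S'_2$ and to satisfy $S'_2\sem S'_1\subs C$; I would then analyse how $C$ meets $S'_0$ and the interface beneath it and derive a contradiction with the absence of a black set between the chains (equivalently, that $C$ can be owned nowhere consistent with $S'_0,S'_1,S'_2$ all being white). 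Once type 2a is excluded we have $\ell'=1$; the two edges of $\SS'$ then carry weights $1$ and, by heaviness, $2$, giving $w(\SS')=3$ and establishing part~1.

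For part~2 I would run the mirror--image analysis at the up--facing end $b_{\ell+1}$ of the lower chain. By Lemma~\ref{l:9} we have $w(\SS)\le 5$, and the only configurations reaching $5$ require $\ell\ge 2$ together with a \emph{distinct} core containing $b_{\ell+1}$ that contributes $2$ at the top edge $a_\ell b_{\ell+1}$; I would show this up--facing core is forced to coincide with the core already crossing the chain (so the top edge has weight $1$), whence $w(\SS)\le 4$. Finally, if $\ell=1$ then $\SS$ is of type~1, and if in addition $a_0\notin U$ the remaining ends $b_1,a_1,b_2$ all lie in $U$; I would contradict this by applying Lemma~\ref{l:a} and Lemma~\ref{l:Si} at $S_0$, forcing either $S_0$ to own a core or an extra core to appear off $\SS$, incompatible with the local structure. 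I expect the two interface arguments --- excluding type~2a for $\SS'$ and the weight--$5$ configuration for $\SS$ --- to carry essentially all the work; the delicate point is that an interface set may have several children, so one must argue that the core in question cannot hide in a sibling subtree and must instead force a black set between the chains.
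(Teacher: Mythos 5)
There is a genuine gap at the crux of part~1: your exclusion of case~(2a) for the upper chain $\SS'$. You propose to derive a contradiction from the ownership of the type-(2a) core $C \supseteq \{b'_1,b'_2,a'_2\}$, claiming ``$C$ can be owned nowhere consistent with $S'_0,S'_1,S'_2$ all being white.'' This is false: such a $C$ crosses $S'_1$ and $S'_2$ and is owned by some (black) \emph{ancestor} of the top of $\SS'$, which the bad-pair hypothesis does not forbid --- the hypothesis only excludes black sets \emph{between} $S_\ell$ and $S'_0$. Moreover $C$ need not even meet $S'_0$ (only $b'_1 \in S'_1 \sem S'_0$ is guaranteed), so Lemma~\ref{l:Si} gives you nothing below $S'_1$. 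Indeed, a type-(2a) chain is perfectly realizable in isolation --- it is exactly the gadget used in the paper's tight examples (Fig.~\ref{f:tight7}) --- so no argument that looks only at $\SS'$ can rule it out. The telltale sign is that your part-1 argument never uses the heaviness of the \emph{lower} chain $\SS$, which is essential: by Lemma~\ref{l:cases}, heaviness of $\SS$ forces some internal node among $a_1,b_1,\ldots,a_\ell,b_\ell$ to lie in a core $C_0$; since every set from there up through $\SS'$ is white, $C_0$ is owned by none of them and hence (Lemma~\ref{l:CS}) crosses $S'_1$; then Lemma~\ref{l:Si} gives $S'_1 \sem S'_0 \subs C_0$, so $a'_1 \in C_0 \subs U$. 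This kills cases (2a), (2b) and (3) simultaneously (all require $a'_1 \notin U$), or equivalently forces $\ell'=1$ via Lemma~\ref{l:a}(i); combined with your correct observation that $a'_0 \notin U$, it yields $w(\SS')=3$. This imported core is the one idea your proposal is missing, and without it part~1 does not close.

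Your part~2 is much closer to the paper: the paper likewise propagates the cores at the top of $\SS$ through the white interface to get $S'_1 \sem S'_0 \subs C \cap C'$, contradicting core disjointness, hence one of $a_\ell, b_{\ell+1} \notin U$ and $w(\SS) \le 4$. Two bookkeeping corrections, though. First, if the ``up-facing'' core containing $b_{\ell+1}$ were ``forced to coincide'' with the chain core containing $a_\ell$, the conclusion is not that the top edge has weight $1$: an edge with both ends in one core covers \emph{no} core (cores are disjoint), contradicting the standing hypothesis that every edge of $I$ covers some core; the correct conclusion of the coincidence argument is that one of $a_\ell,b_{\ell+1}$ lies outside $U$. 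Second, your separate contradiction argument at $S_0$ for the case $\ell=1$ is both unnecessary and shaky (with $a_0 \notin U$, Lemma~\ref{l:a} at $S_0$ gives nothing to work with): once you know one of $a_1,b_2 \notin U$, heaviness of a length-$1$ chain means at least three of $a_0,b_1,a_1,b_2$ lie in $U$ (Lemma~\ref{l:cases}(1)), so $a_0 \in U$ follows immediately.
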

\begin{proof}
Consider the lower chain $\SS$. By Lemma~\ref{l:cases}, 
one of the nodes $a_1,b_1, \ldots,a_\ell,b_\ell$ is in $C$ for some $C \in \CC$.
The core $C$ is not owned by sets in $\SS \cup \SS'$ nor by sets between $\SS$ and $\SS'$, since all these sets are white.
Thus $C$ crosses all sets in the upper chain $\SS'$, and in particular the set $S'_1$. 
By Lemma~\ref{l:Si} $S'_1 \sem S'_0 \subs C$, 
and in particular $a'_1 \in C$. This implies $\ell=1$, by Lemma~\ref{l:a}. 
Moreover, $a'_0 \notin U$, as otherwise by Lemma~\ref{l:a} $S'_0$ is black,
contradicting the assumption that there is no black set between $\SS$ and $\SS'$. 
This proves part~1. 

For part~2, we claim that one of $a_\ell,b_{\ell+1}$ is not in $U$. 
Suppose to the contrary that $a_{\ell} \in C$ and $b_{\ell+1} \in C'$ for some distinct $C,C' \in \CC$. 
Then each of $C,C'$ crosses all the sets in $\SS'$, and in particular the first set $S'_1$. 
Thus by Lemma~\ref{l:Si} $S'_1 \sem S'_0 \subs C \cap C'$, 
and in particular $a'_1,b'_1 \in C \cap C'$. This contradicts that the cores are disjoint.
Consequently, one of $a_\ell,b_{\ell+1}$ is not in $U$, which implies $w(\SS) \le 4$, by Lemma~\ref{l:cases};
note that $w(\SS)=5$ is possible only in case (3) of Lemma~\ref{l:cases} when $a_3,b_4 \in U$. 
If $\ell=1$, then $a_0 \in U$ as otherwise $\SS$ is not heavy.
\end{proof}

Lemma~\ref{l:12} already implies the first part~1 of Lemma~\ref{l:pliable}, that $w(\SS)+w(\SS') \le 7$.
We will show that it also implies part~2. 
Suppose to the contrary that there is another maximal white chain $\SS''$ between $\SS$ and $\SS'$.
To obtain a contradiction we apply Lemma~\ref{l:12} twice, as follows.
\begin{itemize}
\item
Since $\SS \prec \SS''$, Lemma~\ref{l:12} implies $\ell''=1$ and  $a''_0 \notin U$.
\item
Since $\SS'' \prec \SS'$, Lemma~\ref{l:12} implies that if $\ell''=1$ then $a''_0 \in U$.
\end{itemize}
In the first application $a''_0 \notin U$ while in the second $a''_0 \in U$, arriving at a contradiction.

\medskip

This concludes the proof of Lemma~\ref{l:pliable}, 
and thus also of Lemma~\ref{l:2C} and Theorem~\ref{t:1}.

\begin{figure} \centering \includegraphics[scale=0.90]{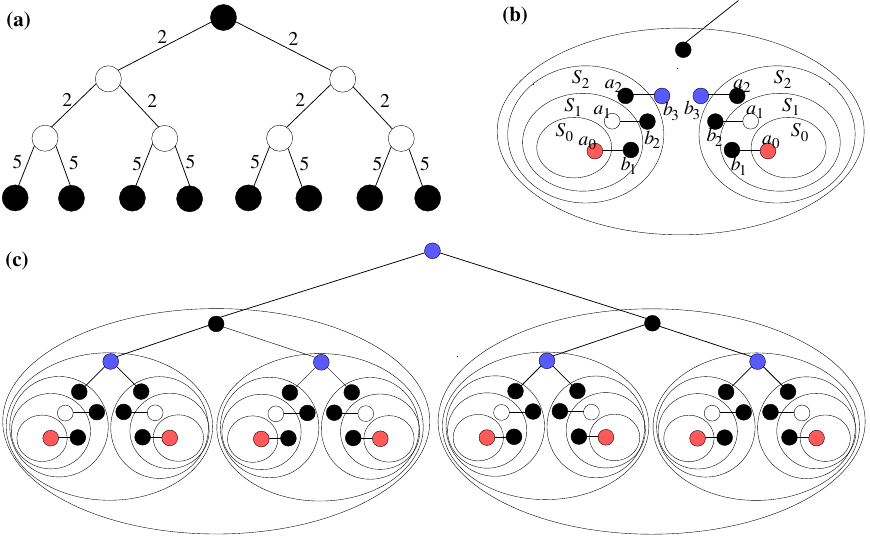}
\caption{Construction of a tree $\TT$ of weight $7|L|-2$ and a set of $|L|+2$ cores.
(a) The shortcut tree. (b) The gadgets. (c) The laminar family and the cores.}
\label{f:tight7} \end{figure}

The following example shows that the bound in (\ref{e:7}) is asymptotically tight.
The shortcut-tree is a binary tree with black nodes $B=L \cup \{r\}$ and weights $5$ for leaf edges 
while all the other edges have weight $2$; see Fig.~\ref{f:tight7} for an illustration for the case $|L|=8$.
To materialize this tree in terms of the laminar family and cores, do the following.
\begin{itemize}
\item
Replace every leaf edge by the gadget as in case (2a) in Lemma~\ref{l:cases} 
where $a_0,b_3 \in U$ belong to distinct cores. 
\item
Every other edge will connect two distinct cores, when the same cores are used for distinct edges.
\end{itemize}
Every red node is a core (these cores are distinct), and there are two additional cores -- 
one contains all black nodes and the other all blue nodes; these two cores are owned by the root $V$.
The number of cores is $|L|+2$, while the total weight is $5|L|+2(|L|-1)=7|L|-2$. 

\newpage

\section{Improved approximation for sparse families (Theorem~\ref{t:2})} \label{s:sparse}

Recall that a set family $\FF$ is sparse if for any edge set $J$, 
every set $S \in \FF^J$ crosses at most one $\FF^J$-core.
This implies that if $\FF$ is sparse, then so is any residual family $\FF'$ of $\FF$.
Due to this and Lemmas \ref{l:res} and \ref{l:2C'},
to prove that the WGMV algorithm achieves approximation ratio $6$ for a $\ga$-pliable sparse family $\FF$, 
it is sufficient to prove the following. 

\begin{lemma} \label{l:6C}
Let $I$ be an inclusion minimal cover of a $\gamma$-pliable sparse set family $\FF$ such that 
every edge in $I$ covers some $C \in \CC$. Then 
\begin{equation} \label{e:6}
\sum_{C \in \CC} d_I(C) \leq 6 |\CC|-2 \ .
\end{equation}
\end{lemma}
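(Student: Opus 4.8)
The plan is to mirror the proof of Theorem~\ref{t:1} step for step, inserting sparsity only where it buys the extra saving. A residual family of a sparse family is again sparse (as noted at the start of this section) and, by Lemma~\ref{l:res}, again $\ga$-pliable, so Lemma~\ref{l:2C'} reduces everything to the combinatorial statement $w(I)\le 6|B|-2$ on the shortcut tree $T=(B\cup W,I),w$. I would keep the dichotomy of Section~\ref{s:t}: first prove a sparse analogue of Lemma~\ref{l:7} giving $w(I)\le 6|B|-2$ whenever $T$ has no bad pair, then prove a sparse analogue of Lemma~\ref{l:pliable}, and finish by the same weight-shifting manipulation that removes all bad pairs.

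The structural gain from sparsity that I would exploit first is the following consequence of Lemma~\ref{l:Si}: in any white chain $\SS=(S_1,\dots,S_\ell)$ all of $S_1\subset\cdots\subset S_\ell$ cross one and the same core. Indeed, if a core $C$ crosses some $S_i$ then, $S_i$ being a child of $S_{i+1}$, Lemma~\ref{l:Si} forces $C$ to cross $S_{i+1}$ as well, and sparsity leaves no room for a second crossed core. This sharpens Lemma~\ref{l:cases}, and, more importantly, I expect it to sharpen the bad-pair estimate of Lemma~\ref{l:12}: I would aim to show that a bad pair $(\SS,\SS')$ satisfies $w(\SS)+w(\SS')\le 6$ rather than $7$. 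With total weight $6$, the shift (set $w(e')=2$ and $w(e)=w(e)+w(e')-2$) leaves every shifted lower edge of weight $\le 4$, so that after shifting the only heavy edges of weight $5$ are genuine, unshifted chains.

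The point of isolating genuine weight-$5$ chains is that sparsity endows each with an extra black node. Since every edge of a chain covers a core, a weight-$5$ chain must have $a_0\in U$, and then Lemma~\ref{l:a} makes $S_0$ own a core; that is, the lower endpoint of the corresponding shortcut edge is \emph{black}. Writing $t_3,t_4,t_5$ for the numbers of heavy edges of weight $3,4,5$, the no-bad-pair target is $3t_5+2t_4+t_3\le 2|B|$, which together with $w(I)\le 2(|W|+|B|-1)+(t_3+2t_4+3t_5)$ and $|W|\le|B|$ yields exactly $w(I)\le 6|B|-2$. I would obtain it by charging every heavy edge to its nearest descendant black node (injective, since $T$ has no bad pair) and charging every weight-$5$ edge to a \emph{second} black node supplied by its black lower endpoint $S_0$; if the two charges are globally injective then $2t_5+t_4+t_3\le|B|$, whence $3t_5+2t_4+t_3=(2t_5+t_4+t_3)+(t_5+t_4)\le |B|+|B|=2|B|$.

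The hard part will be exactly this second charge: the extra black node of each weight-$5$ edge must be distinct from all the primary nearest-descendant-black charges \emph{and} from one another. The natural candidates---the black ancestor owning the common crossed core, or the sibling owner of the top end-core---can a priori coincide for different weight-$5$ chains, so injectivity is not automatic. Here I expect sparsity to be decisive: two weight-$5$ chains sharing a nearest black ancestor cannot be nested, since that would be a bad pair, and hence sit in disjoint subtrees with disjoint black-node pools, which should force the required distinctness. Turning this picture into a clean injective assignment, and simultaneously verifying that the shift of Section~\ref{s:t} preserves the ``black lower endpoint'' property of the surviving weight-$5$ edges, is the one delicate point of the whole argument; everything else is a transcription of the proof of Theorem~\ref{t:1}.
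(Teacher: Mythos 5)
There is a genuine gap, and it sits exactly where you flagged it---but the collision is worse than you think, and the unproven claim you lean on earlier is also false. First, your double-charging scheme breaks within a single edge: for a weight-$5$ shortcut edge $e$ the lower endpoint $S_0$ is black (this is Lemma~\ref{l:9}, which needs no sparsity), so the \emph{nearest descendant black node} of $e$ \emph{is} $S_0$ itself; your primary and secondary charges of the same weight-$5$ edge therefore coincide, and the intended inequality $2t_5+t_4+t_3\le|B|$ collapses before any cross-chain coincidences even arise. The black node that sparsity actually buys is a different one: the paper's Lemma~\ref{l:sparse} shows that if $w(e)=5$ then \emph{both} ends of $e$ are black --- the two distinct cores containing $a_\ell$ and $b_{\ell+1}$ both meet the parent $S$ of $S_\ell$, sparsity lets at most one of them cross $S$, so the other is owned by $S$, making the upper end black. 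Your observation that all sets of a white chain cross one common core points in the right direction but you never make this deduction, and it is the crux. Second, your claim that sparsity improves the bad-pair bound to $w(\SS)+w(\SS')\le 6$ is unfounded: the paper's Corollary~\ref{c:bad} retains $w(e)\le 4$, $w(e')=3$ in the sparse setting, and a $4+3$ bad pair is consistent with sparsity (take a lower chain of type (2a) of Lemma~\ref{l:cases} with $a_0$ in a core owned by $S_0$ and $b_3\notin U$; every set involved crosses only the single core $C$). Since your plan that ``after shifting the only heavy edges of weight $5$ are genuine, unshifted chains'' depends on this claim, the architecture fails: the shifted lower edge of a $7$-total bad pair has weight $5$ but a \emph{white} upper end (there is no black node between the two edges of a bad pair), so it cannot be treated like a genuine weight-$5$ edge even after Lemma~\ref{l:sparse} is in hand.

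The paper's actual proof avoids both problems by not attempting a globally injective charge at all. It keeps the bad pairs (Corollary~\ref{c:bad}, derived from Lemma~\ref{l:12}, additionally gives that the upper edge of a bad pair has a black upper end), assigns to each heavy edge $e$ that is \emph{not} the upper edge of a bad pair a black node $b_e\in B_e$ strictly below $e$ with no heavy edge in between (the sets $B_e$ are disjoint by construction, so no injectivity argument is needed), and then runs a token-redistribution induction over the black-white tree (Lemma~\ref{l:tree''}): $4$ tokens per leaf, $3$ per internal black node, $2$ extra per chosen $b_e$; after spending the extra tokens every surviving weight-$3$ edge has a black upper end (either an original weight-$5$ edge via Lemma~\ref{l:sparse}, or a bad-pair upper edge via Corollary~\ref{c:bad}), which is precisely what makes the bottom-up induction close. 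This yields $w(I)\le 3|B|+|L|+2|B^*|-2\le 6|B|-2$, i.e., (\ref{e:6}). To repair your proposal you would need to (i) replace your secondary charge by the black \emph{upper} end via a proof of Lemma~\ref{l:sparse}, (ii) abandon the $\le 6$ bad-pair claim and handle bad pairs structurally as the paper does, and (iii) replace naive injectivity by something like the paper's token induction, since the black upper end of one weight-$5$ edge can serve as the nearest descendant black node of a heavy edge above it.
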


In the proof of Lemma~\ref{l:6C} we will use the first part of the following lemma.

\begin{lemma} \label{l:sparse}
If $\FF$ is sparse then for any edge $e$ of the shortcut tree the following holds: 
\begin{itemize}
\item
If $w(e)=5$ then both ends of $e$ are black.
\item
If $w(e)=4$ then at least one end of $e$ is black.
\end{itemize}
\end{lemma}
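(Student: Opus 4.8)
Recall that an edge $e$ of the shortcut tree with $w(e)\ge 3$ arises by short-cutting a heavy maximal white chain $\SS=(S_1,\dots,S_\ell)$ with child $S_0$ and edges $a_0b_1,\dots,a_\ell b_{\ell+1}$, and that the two ends of $e$ are the tree node $S_0$ (the owner of $a_0$) and the node $P$ that was the parent of $S_\ell$ in $\TT$. The plan is to read off from Lemma~\ref{l:cases} exactly which endpoints lie in $U=\bigcup_{C\in\CC}C$ when $w(e)=5$ and $w(e)=4$, and then to treat the two ends separately: the lower end $S_0$ is controlled by Lemma~\ref{l:a}, while the upper end $P$ is where sparsity does the work. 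A direct inspection of Lemma~\ref{l:cases} shows that $w(e)=5$ forces either case~(2a) with $a_0,b_3\in U$ or case~(3) with $b_4\in U$, and that $w(e)=4$ forces $a_0\in U$ in every case except the subconfiguration of case~(2a) with $a_0\notin U$ and $b_3\in U$.

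\textbf{The lower end.} Whenever $a_0\in U$, Lemma~\ref{l:a} shows that $S_0$ owns the core containing $a_0$, so $S_0$ is black. By the case analysis above this already proves the $w(e)=4$ assertion in every case but the one isolated subconfiguration of case~(2a), and it supplies one black end for every $w(e)=5$ edge. Thus it remains only to prove that the upper end $P$ is black, both in the two $w(e)=5$ configurations and in the residual $w(e)=4$ case~(2a) configuration.

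\textbf{The upper end via sparsity.} In each of these configurations there is a \emph{dominant} core $C$ -- the core containing $b_1,b_2,a_2$ in case~(2a), and the one containing $b_2,b_3,a_3$ in case~(3) -- and by Lemma~\ref{l:Si} one checks that $C$ crosses the topmost white set $S_\ell$ (using that $S_\ell\sem S_{\ell-1}$ contains a node of $C$ lying outside $S_{\ell-1}$). Since $\FF$ is sparse, $S_\ell$ crosses at most one core, so no core other than $C$ crosses $S_\ell$. As the edge $a_\ell b_{\ell+1}$ must cover a core and $a_\ell\in C$, the node $b_{\ell+1}$ lies in a core $C'\ne C$; by Lemma~\ref{l:CS} the failure of $C'$ to cross $S_\ell$ then forces $C'\cap S_\ell=\empt$. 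To exhibit a core owned by $P$ I would split on the position of $C$. If $C\subs P$, then since $C$ crosses $S_\ell$ we have $C\not\subs S_\ell$ while $C\cap S_\ell\ne\empt$, so $C$ is contained in no single child of $P$ and is therefore owned by $P$, whence $P$ is black. Otherwise $C$ crosses $P$, and applying Property~$(\gamma)$ to $S_\ell\subs P$ with the core $C$ (which crosses both) produces $D=P\sem(S_\ell\cup C)$; since $b_{\ell+1}\in D$ we have $D\ne\empt$, so $D\in\FF$ and $D$ contains a core lying below $P$.

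\textbf{Main obstacle.} The delicate step is the second sub-case: I must guarantee that the core extracted at the top is owned by $P$ \emph{itself} and not by a sibling of $S_\ell$, since a core disjoint from $S_\ell$ could a priori sit strictly inside another child of $P$. The mechanism that closes this is again sparsity, now applied at $P$: at most one core crosses $P$, and $C$ already occupies that crossing slot, so the core accounting for $b_{\ell+1}$ cannot cross $P$ and its owner is pinned to $P$. Making this argument airtight -- ruling out that $C$ crosses $P$ while $b_{\ell+1}$ hides in a sibling -- is the heart of the proof; the remainder is a finite check against the cases of Lemma~\ref{l:cases} together with routine applications of Lemmas~\ref{l:CS} and~\ref{l:Si}. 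The $w(e)=4$ clause then follows immediately, since at least one of the two ends is shown black in every configuration.
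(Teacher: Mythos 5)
Your overall strategy matches the paper's: reduce via Lemma~\ref{l:cases} to the configurations where $a_0\in U$ handles the lower end (Lemma~\ref{l:a}), and use sparsity to make the parent $P$ of $S_\ell$ black in the remaining configurations, where $a_\ell$ and $b_{\ell+1}$ lie in distinct cores $C,C'$. Your case analysis of the weights is correct (note that in case (3) the hypothesis that every edge of $I$ covers a core already forces $a_0\in U$, since $b_1\notin U$), and your treatment of the sub-case $C\subs P$ is sound. But the proof has a genuine gap, and you flag it yourself: in the sub-case where $C$ crosses $P$, you assert that the owner of $C'$ ``is pinned to $P$'' without proof, and your detour through Property~$(\gamma)$ does not close it --- the core inside $D=P\sem(S_\ell\cup C)$ may perfectly well be owned by a sibling of $S_\ell$, so exhibiting it proves nothing about the color of $P$. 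Sparsity alone gives only $C'\subs P$ (via Lemma~\ref{l:CS}, since $C'\cap P\ni b_{\ell+1}$ and $C'$ cannot be the second core crossing $P$); it does not by itself exclude $C'\subs S'$ for another child $S'$ of $P$.

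The missing ingredient is a fact about the laminar witness family, not about sparsity: the endpoint $b_{\ell+1}$ lies in $P$ but in \emph{no} child of $P$. Indeed, if $b_{\ell+1}\in S'$ for a child $S'\in\LL$ of $P$ other than $S_\ell$, then the edge $a_\ell b_{\ell+1}$ (whose other end lies in $S_\ell$, disjoint from $S'$) would be the unique edge of $I$ covering $S'$ as well as the unique edge covering $S_\ell$; two sets of $\LL$ would then witness the same edge, contradicting $|\LL|=|I|$ with every edge having a witness (the same bijection argument shows $b_{\ell+1}\in P$ in the first place). Given this, any core containing $b_{\ell+1}$ and contained in $P$ has $P$ as the inclusion-minimal set of $\LL$ containing it, so $P$ owns $C'$ and is black. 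With that fact in hand your case split becomes unnecessary: as in the paper's proof, sparsity gives that at least one of $C,C'$ does not cross $P$, and \emph{whichever} it is, it is owned by $P$ --- $C$ because it crosses $S_\ell$ and hence fits in no child of $P$, and $C'$ by the endpoint fact above. (A minor point you should also handle: $P$ may be the augmented root $V\notin\FF$, where sparsity does not literally apply; but then no core crosses $P$ at all and the same ownership argument goes through.)
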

\begin{proof}
Let $\SS$ be a white chain.
By Lemma~\ref{l:cases}, if $w(\SS)=5$ then $a_0,a_\ell,b_{\ell+1} \in U$; 
see cases (2a) and (3) in Figure~\ref{f:2} and Lemma~\ref{l:cases}. 
Thus by Lemma~\ref{l:a} $S_0$ is black (since $a_0 \in U$).
Note that $a_\ell,b_{\ell+1}$ belong to distinct cores, say $a_\ell \in C$ and $b_{\ell+1} \in C'$.
Let $S$ be the parent of $S_\ell$. 
Since $\FF$ is sparse, at least one of $C,C'$ cannot cross $S$, and thus is owned by $S$.
Hence $S$ is also black.
If $w(\SS)=4$ then 
$a_0 \in U$ and then $S_0$ is black, or 
$a_\ell,b_{\ell+1} \in U$ and then the parent $S$ of $S_\ell$ is black.
\end{proof}

Let $T=(W \cup B,I),r,w$ be the shortcut tree; 
recall that $T$ is a black-white tree, namely, every non-root node with exactly one child is black.
We already know that $w(e) \le 5$ and if $w(e)=5$ then $e$ has its lower end in $B$ (by Lemma~\ref{l:9});
Lemma~\ref{l:sparse} adds the property that if $w(e)=5$ then $e$ has both ends in $B$.
Furthermore, Lemma~\ref{l:12} implies the following.

\begin{corollary} \label{c:bad}
For any bad pair $(e,e')$ of $T$ the following holds.
\begin{enumerate}
\item
There is no heavy edge between $e$ and $e'$.
\item
$w(e) \le 4$, $w(e')=3$ and $e'$ has an upper end in $B$.
\end{enumerate}
\end{corollary}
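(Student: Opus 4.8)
The plan is to deduce Corollary~\ref{c:bad} from Lemma~\ref{l:12} in its white-chain formulation, together with the sparsity hypothesis and the laminar witness structure of Lemma~\ref{l:witness}. A bad pair $(e,e')$ of heavy edges of $T$ corresponds to a bad pair $(\SS,\SS')$ of heavy maximal white chains with $\SS \prec \SS'$, where $e,e'$ are the shortcut edges of $\SS,\SS'$. Part~1 is then immediate, being exactly part~2 of Lemma~\ref{l:pliable}, which was already obtained from Lemma~\ref{l:12}. For part~2, the bounds $w(e')=3$ and $w(e) \le 4$ are parts~1 and~2 of Lemma~\ref{l:12}, so the only substantial claim is that the upper end of $e'$, namely the parent $S'_2$ of the single set $S'_1$ of the upper chain, is black.

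First I would read off the structure of $\SS'$. By Lemma~\ref{l:12} we have $\ell'=1$ and $a'_0 \notin U$, so $\SS'=(S'_1)$ has edges $a'_0b'_1$ and $a'_1b'_2$ with $a'_1 \in S'_1$ and $b'_2 \in S'_2 \sem S'_1$. Since $a'_0 \notin U$ and every edge of $I$ covers a core, the edge $a'_0b'_1$ has weight $1$; as $w(\SS')=3$, the top edge $a'_1b'_2$ has weight $2$, and since the cores are pairwise disjoint (Lemma~\ref{l:CS}) its two ends lie in two distinct cores $C_1 \ni a'_1$ and $C_2 \ni b'_2$.

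Next I would run the sparsity argument from the proof of Lemma~\ref{l:sparse} at the set $S'_2$. Both $C_1$ and $C_2$ meet $S'_2$ (as $a'_1,b'_2 \in S'_2$), so by sparsity at most one of them crosses $S'_2$; the other, call it $C^*$, meets but does not cross $S'_2$ and hence satisfies $C^* \subs S'_2$ by Lemma~\ref{l:CS}. It then remains to show that $S'_2$ itself, and not a proper descendant, owns $C^*$, which yields that $S'_2$ is black. If $C^*=C_1$ this is clean: $C_1$ crosses $S'_1$ (as established inside the proof of Lemma~\ref{l:12}), so $C_1$ meets $S'_1$ but is not contained in it, and hence cannot sit inside the subtree of $S'_1$ nor, by laminarity, inside a sibling subtree of $S'_1$; thus $S'_2$ owns $C_1$.

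The main obstacle is the case $C^*=C_2$, where a priori $C_2 \subs S'_2$ could be owned by a sibling subtree of $S'_1$. Here I would invoke the witness structure: $S'_1$ is the witness set of the edge $a'_1b'_2$, i.e. $\delta_I(S'_1)=\{a'_1b'_2\}$, and the correspondence between witness sets and edges in $\LL$ is a bijection (each witness set $S$ satisfies $\delta_I(S)=\{e\}$ for a unique $e$, and $|\LL|=|I|$). If some child $D \ne S'_1$ of $S'_2$ contained $b'_2$, then $a'_1b'_2$ would cross $D$ and hence be the witness edge of $D$ as well, forcing $D=S'_1$ by injectivity -- a contradiction. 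Therefore $b'_2$ lies in no proper descendant of $S'_2$, so $C_2$ is owned by $S'_2$. In either case $S'_2$ is black, which is the remaining assertion of part~2.
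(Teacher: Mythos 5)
Your proposal is correct and follows essentially the same route as the paper: the corollary is obtained by combining Lemma~\ref{l:12} (which gives $w(e')=3$, $\ell'=1$, $a'_0 \notin U$ and $w(e) \le 4$, with part~1 being exactly part~2 of Lemma~\ref{l:pliable}) with the same sparsity argument used in the proof of Lemma~\ref{l:sparse} -- the ends $a'_1,b'_2$ of the top edge lie in distinct cores, at most one of which can cross the parent of $S'_1$, so the other is owned by that parent, making it black. Your only addition is to justify explicitly, via the bijection between edges and witness sets of the laminar family $\LL$, that $b'_2$ lies in the parent of $S'_1$ and not in a sibling child; this step is left implicit in the paper (both here and in the proof of Lemma~\ref{l:sparse}), so your spelling it out is extra rigor rather than a deviation.
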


For a heavy edge $e$ let $B_e$ be the set of black nodes $b \in B$ such that
$b$ is a descendant of $e$ and there is no heavy edge between $e$ and $b$.
Note the following.
\begin{itemize}
\item
$B_e=\empt$ if and only if $e$ is an upper edge of a bad pair. 
\item
$B_e \cap B_f=\empt$ for distinct $e,f$.
\end{itemize}

Let $H^*$ be the set of heavy edges $e$ such that $e$ is not the upper edge of a bad pair,
so $B_e \ne \empt$ for all $e \in H^*$. 
For every $e \in H^*$ choose one node $b_e \in B_e$ and let 
$B^*=\{b_e:e \in H^*\}$ be the set of chosen black nodes.  
Now we will prove the following lemma, that implies Lemma~\ref{l:6C}.

\begin{lemma} \label{l:tree''}
$w(I) \le 3|B|+|L| +2|B^*|-2$ and $w(I) \le 3|B|+|L| +2|B^*|-4$ 
if $r$ is black or has at least $2$ children.
\end{lemma}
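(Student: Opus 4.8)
The plan is to reduce the claimed estimate to a single clean counting inequality and then prove that inequality by an induction over the decomposition of $T$ into light components. First I would bound every light edge by $2$ and write
\[
w(I)\le 2|I|+\sum_{e\ \text{heavy}}(w(e)-2)=2(|W|+|B|-1)+3n_5+2n_4+n_3^*+n_3^D,
\]
where $n_5,n_4$ count the edges of weight $5$ and $4$, and $n_3^*,n_3^D$ count the weight-$3$ edges lying in $H^*$ and the weight-$3$ edges that are upper edges of bad pairs, respectively. Since by Corollary~\ref{c:bad} every bad-upper edge has weight $3$, all weight-$5$ and weight-$4$ edges lie in $H^*$, so $|B^*|=|H^*|=n_5+n_4+n_3^*$. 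Extracting $2|B^*|$ and simplifying, the target $w(I)\le 3|B|+|L|+2|B^*|-2$ reduces to
\[
2|W|+n_5+n_3^D\le |B|+|L|,
\]
and decreasing the right-hand side by $2$ gives the $-4$ version.

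Next, using the already-recorded bound $|W|\le|L|$ (and $|W|\le|L|-1$ when $r$ is black or has at least two children), it suffices to prove the single inequality $|W|+n_5+n_3^D\le|B|$ (resp.\ $\le|B|-1$ in the special case), since adding the two inequalities yields exactly what is needed. To prove this I would cut $T$ along its heavy edges into \emph{light components} (maximal subtrees joined only by light edges): the black nodes of the light component hanging below a heavy edge $e$ are exactly $B_e$, so the $B_e$ together with the black nodes having no heavy edge above them partition $B$, and the components themselves form a tree linked by the heavy edges.

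The heart of the argument is an induction over this component tree. For a component $K$ let $\beta,\omega,\lambda$ be its numbers of black nodes, white nodes and leaves of $T$, and let $q$ be the number of heavy edges leaving $K$, of which $q_5$ have weight $5$ and $q_D$ are bad-upper. Define the potential $P(K)=\beta-\omega-q_5-q_D+\sum_{K'}P(K')$, summed over the components $K'$ immediately below $K$; telescoping, $P$ of the root component equals $|B|-|W|-n_5-n_3^D$, so it suffices to show $P\ge 0$ there. I would prove $P(K)\ge 1$ for every non-root component by induction, combining the degree count in $K$, which gives $\omega=q+\lambda-1-X-Y$ with $X,Y\ge 0$ recording the excess degrees of the black non-leaf nodes and of the white nodes of $K$, with the crucial structural fact that every weight-$5$ edge (Lemma~\ref{l:sparse}) and every bad-upper edge (Corollary~\ref{c:bad}) has a \emph{black} upper endpoint. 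Hence each of the $q_5+q_D$ such leaving edges is attached to a black non-leaf node of $K$, forcing $X\ge q_5+q_D-(\beta-\lambda)$; feeding the inductive hypothesis $P(K')\ge 1$ and these two relations into the definition collapses everything to $P(K)\ge Y+1$.

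I expect the main obstacle to be exactly the all-white components created by bad pairs, where $\beta=0$ and any naive estimate breaks down; this is precisely where the ``black upper endpoint'' fact pays off, since in an all-white component every potential upper endpoint is white, so no weight-$5$ or bad-upper edge can leave it, whence $q_5=q_D=0$ and the induction passes through unchanged. Finally $Y\ge 0$ for every non-root component (non-root white nodes have at least two children), giving $P\ge 1$; at the root, $Y\ge 0$ holds exactly when $r$ is black or has at least two children, which simultaneously yields $P\ge 1$ and upgrades $|W|\le|L|$ to $|W|\le|L|-1$, producing the sharper $-4$ bound. The one delicate point to write out carefully is the degree bookkeeping around a black node that hosts several weight-$5$ or bad-upper edges, which is what the bound $X\ge q_5+q_D-(\beta-\lambda)$ must absorb.
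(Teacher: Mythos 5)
Your proposal is correct, but it takes a genuinely different route from the paper's proof. The paper proves Lemma~\ref{l:tree''} by a token-redistribution argument: assign $4$ tokens to each leaf, $3$ to each internal black node, and $2$ extra to each node of $B^*$; for each $e \in H^*$ the two extra tokens of $b_e$ reduce $w(e)$ by $2$, after which every edge has weight at most $3$ and every weight-$3$ edge has a black upper end (by Lemma~\ref{l:sparse} and Corollary~\ref{c:bad}); then an induction on subtree height shows that every non-root subtree can pay for its edges with $4$ tokens left at its root, the root case giving the $-2$ versus $-4$ distinction. You instead distill the statement into the single counting inequality $|W|+n_5+n_3^D \le |B|$ (resp.\ $\le |B|-1$), proved by a potential-function induction over the light-component decomposition. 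I checked your key steps and they are sound: the algebra of the first reduction works (in fact you prove slightly more than needed, since $n_3^*$ could be kept on the right-hand side); $|B^*|=|H^*|$ holds because the sets $B_e$ are pairwise disjoint; the degree count $\omega=q+\lambda-1-X-Y$ is correct for non-root components because their white nodes are non-root in $T$ and hence have at least two children; and the bound $X \ge q_5+q_D-(\beta-\lambda)$ follows since each of the $q_5+q_D$ special leaving edges occupies a distinct child slot of a black non-leaf node of $K$, which with the inductive hypothesis $P(K')\ge 1$ collapses to $P(K)\ge 1+Y$ exactly as you say. Notably, your induction uses only part~2 of Corollary~\ref{c:bad} (weight exactly $3$ and black upper end), and this use is essential: if a bad-upper edge could have a white upper end, nested bad pairs would violate your inequality. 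Both arguments thus consume the same structural inputs, but yours makes explicit a clean invariant that the paper's token scheme establishes only implicitly, at the cost of a heavier decomposition; the paper's scheme is more local and self-contained.

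Two small points to repair in a write-up. First, the paper records $|W|\le |L|\le |B|$ and, in the special case, $|W|\le |B|-1$ --- not $|W|\le |L|-1$ as you claim is ``already recorded''; the stronger bound is true (when $r$ is black or has at least two children, every white node has at least two children, so the number of leaves strictly exceeds $|W|$), but you must prove it rather than cite it. Second, for the $-2$ bound you also need $P\ge 0$ at the root component when $r$ is white with exactly one child; your machinery does give this, since then $Y\ge -1$ and $P\ge 1+Y\ge 0$, but you treat only the special case explicitly, so this residual case should be stated.
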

\begin{proof}
Assign tokens to nodes in $B$ as follows:
\begin{itemize}
\item
$4$ tokens to every node in $L$.
\item
$3$ tokens to every node in $B \sem L$.
\item
$2$ additional tokens to every node in $B^*$.
\end{itemize}
The number of tokens is 
$4|L|+3|B \sem L|+2|B^*|=3|B|+|L|+2|B^*|$.
We will show that these tokens can be redistributed such that 
every $e \in I$ gets $w(e)$ tokens and some tokens remain.
 
For each $e \in H^*$ use $2$ tokens of $b_e$ to reduce the weight of $e$ by $2$. 
Then we have the following.
\begin{itemize}
\item
Every leaf has $4$ tokens and every internal black node has $3$ tokens. 
\item
The maximum weight is $3$ since initially the maximum weight  was $5$ and 
since the upper edge of every bad pair has weight exactly $3$, by Corollary~\ref{c:bad}.
\item
Every edge of weight $3$ has its upper end in $B$,
since every edge of weight $5$ and the upper edge of every bad pair 
has its upper end in $B$, by Lemma~\ref{l:sparse} and Corollary~\ref{c:bad}. 
\end{itemize}

For $v \in V$ let $T_v$ be the rooted subtree of $T$ that consists of $v$ and its descendants. 
We claim that for any $v \ne r$, the tokens of $T_v$ can be redistributed such that 
every edge gets at least $w(e)$ tokens and 
the root $v$ gets $4$ tokens.
The proof is by induction on he height of the tree.
The induction base case, when $v$ is a leaf, is trivial.
Suppose that $v$ is not a leaf and has $p \ge 1$ children.
By the induction hypothesis each child of $v$ has $4$ tokens. 

Suppose that $v$ is white. Then $p \ge 2$. 
Each child of $v$ is connected to $v$ by an edge of weight $\le 2$ 
(since the upper end of every edge of weight $3$ is black),
and can pay for its parent edge and give $2$ tokens to $v$. 
Thus $v$ gets $2p \ge 4$ tokens.

Suppose that $v$ is black, so $v \in B \sem L$. Then $v$ already has $3$ tokens.
Each child of $v$ is connected to $v$ by an edge of weight $\le 3$. 
Thus in this case each child can pay for his parent edge and give $1$ token to $v$.
Thus $v$ gets $3+p \ge 4$ tokens. 

Now let us consider the root $r$ of $T$ that has $p \ge 1$ children.
If $r$ is white then it gets $2p \ge 2$ tokens. 
If $r$ is black then it gets $3+p \ge 4$ tokens, and 
if $r$ has at least $2$ children then it gets $3+p \ge 4$ tokens.
\end{proof}

\begin{figure} \centering \includegraphics[scale=0.89]{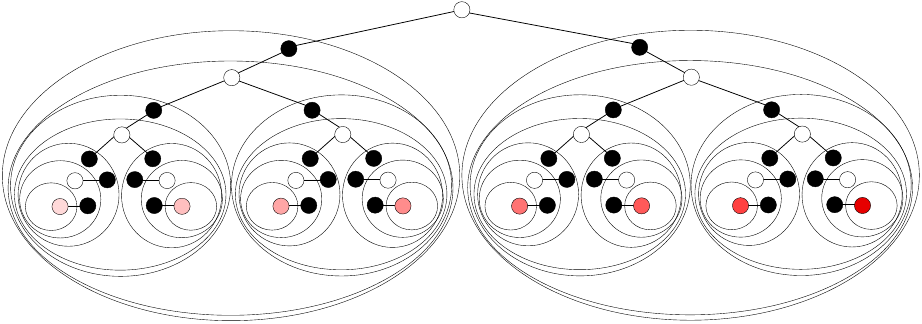}
\caption{Construction of a tree $\TT$ of weight $6|L|-2$ and a set of $|L|+1$ cores.
Any two red nodes belong to distinct cores, while all black nodes belong to the same core.
(a) The shortcut tree. (b) The gadgets. (c) The laminar family and the cores.}
\label{f:tight6} \end{figure}

Lemma~\ref{l:tree''} implies that $w(I) \le 6|B|-2 \le 6|\CC|-2$.
The following example shows that the bound $w(I) \le 6|B|$ is asymptotically tight even when there are no bad pairs.
The shortcut-tree is a binary tree with black nodes $B=L \cup \{r\}$ and weights $4$ for leaf edges 
while all the other edges have weight $2$; see Fig.~\ref{f:tight6} for an illustration for the case $|L|=8$.
To materialize this tree in terms of the laminar family and cores, do the following.
\begin{itemize}
\item
Replace every leaf edge by the Lemma~\ref{l:cases}(2a) gadget, 
where $a_0,\in U$ and $b_3 \notin U$. 
\item
Replace every other edge by the gadget as in case (1) in Lemma~\ref{l:cases} where $a_1,b_1 \in U$ and $a_0,b_3 \notin U$;
this is a ``redundant'' (non-heavy) white chain of weight $2$.    
\end{itemize}
Every node colored by a shade of red is a core 
(we used distinct shades of red to indicate that these cores are distinct), 
and there is one additional core -- 
the one that contains all black nodes; this core is owned by the root $V$.
The number of cores is $|L|+1$, while the total weight is $4|L|+2(|L|-1)=6|L|-2$. 

Note that in this example every member of the laminar family is crossed by at most one core, 
and that there are no bad pairs in this example.

\medskip

To prove the second part of Theorem~\ref{t:2} we prove the following.

\begin{lemma} \label{l:5C}
Let $I$ be an inclusion minimal cover of a $\gamma$-pliable sparse $\be$-crossing set family $\FF$ such that 
every edge in $I$ covers some $C \in \CC$. Then 
\begin{equation} \label{e:5}
\sum_{C \in \CC} d_I(C) \le 5|\CC| +\f{|\CC|}{1+1/\be} =\left(6-\f{1}{\be+1}\right) |\CC| \ .
\end{equation}
\end{lemma}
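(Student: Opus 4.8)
The plan is to reduce the statement to a single counting inequality about heavy edges and then extract the $\be$-dependent saving from the $\be$-crossing property via a charging argument. Recall from Lemma~\ref{l:tree''} that $w(I)\le 3|B|+|L|+2|B^*|-2$, where $B^*=\{b_e:e\in H^*\}$ and $H^*$ is the set of heavy edges that are not the upper edge of a bad pair. Since $|L|\le|B|$ and the nodes $b_e$ are distinct (so $|B^*|=|H^*|$), this already gives $w(I)\le 4|B|+2|H^*|-2$. A direct computation then shows that the target bound follows as soon as $|H^*|\le\f{\be}{\be+1}|\CC|$: substituting yields $w(I)\le\left(4+\f{2\be}{\be+1}\right)|\CC|-2=\left(6-\f{2}{\be+1}\right)|\CC|-2\le\left(6-\f{1}{\be+1}\right)|\CC|$, with room to spare. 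Thus the whole proof reduces to establishing $|H^*|\le\f{\be}{\be+1}|\CC|$, and this is exactly where the $\be$-crossing property should enter.

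To bound $|H^*|$, I would assign to every heavy edge $e\in H^*$ a \emph{crossed core} $\phi(e)\in\CC$ as follows. By Lemma~\ref{l:cases} the chain $\SS$ of $e$ contains a node lying in some core $C$, and by Lemmas~\ref{l:Si} and~\ref{l:a} this $C$ crosses the top set $S_\ell$ of the chain, so $C\sem S_\ell\ne\empt$ and $C$ is owned by a proper ancestor $u(e)$ of the chain; set $\phi(e)=C$. The purpose of this choice is that any two heavy edges with $\phi(e)=\phi(e')=C$ have chains that both cross $C$ and sit below the single node $u(e)$ owning $C$. For pairwise incomparable such chains, one can pick from each a witness set crossing $C$, and these are pairwise disjoint because incomparable sets of the laminar family are disjoint; the $\be$-crossing property then caps their number by $\be$. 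The aim is to upgrade this to $|\phi^{-1}(C)|\le\be$ for every core $C$.

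Writing $m=|H^*|$ and $t=|\phi(H^*)|$ for the number of distinct crossed cores, the previous step gives $m\le\be\,t$. The remaining ingredient I would argue is that the crossed cores $\phi(H^*)$ and the target cores $\{C_{b_e}:e\in H^*\}$ form disjoint subfamilies of $\CC$: each target is owned by a descendant of $e$, each crossed core is owned by a proper ancestor of its chain, and a core is owned by exactly one node. Since the $m$ targets are distinct, this gives $m+t\le|\CC|$, and combining with $m\le\be\,t$ yields $m(1+1/\be)\le|\CC|$, i.e. $m\le\f{\be}{\be+1}|\CC|$, as required. This count is also consistent with the expected tight construction, in which $\be$ leaf-edges share one crossed core, so that $|H^*|/|\CC|\to\be/(\be+1)$.

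The main obstacle is precisely the control of the two ``stacking'' phenomena in the last two paragraphs. First, two \emph{comparable} heavy chains may both map to the same $C$ while being separated by an intermediate black node (so they are not a bad pair, and Lemma~\ref{l:pliable} and Corollary~\ref{c:bad} do not directly exclude them); their nested witness sets are not disjoint, so to get $|\phi^{-1}(C)|\le\be$ I must show that such comparable same-image chains can be thinned to an incomparable antichain, using sparseness together with the bad-pair structure. Second, a heavy edge $e$ lying just below a black node $z$ that is itself the target $b_{e'}$ of a higher heavy edge $e'$ can have $\phi(e)$ owned exactly by $z$, colliding with a target core. Resolving this will require either re-choosing $\phi(e)$ among the higher crossed cores of $e$ whenever the nearest one is already a target, or amortizing over such consecutive pairs of heavy edges, while preserving both $|\phi^{-1}(C)|\le\be$ and the disjointness needed for $m+t\le|\CC|$.
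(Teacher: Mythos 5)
Your reduction overshoots: you replace the paper's counting step by the much stronger claim $|H^*|\le\f{\be}{\be+1}|\CC|$, and this claim is precisely what your proposal does not establish --- indeed it is doubtful that it is true at all. The $\be$-crossing hypothesis only limits the number of \emph{pairwise disjoint} sets that a single core can cross, so it gives no bound on $|\phi^{-1}(C)|$ when the heavy chains mapped to $C$ are nested; your two admitted ``obstacles'' are therefore not loose ends but the entire missing content. Nothing in pliability, sparseness, or $\be$-crossing forbids a long root-to-leaf path on which $m$ heavy weight-$3$ chains alternate with single black nodes while one core $C^*$ threads through all of them, crossing every chain set --- compare the upward color propagation in the paper's own construction in Fig.~\ref{f:tight-be}, where one core crosses arbitrarily many nested sets without violating $\be$-crossing. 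In such a configuration there are no bad pairs, $|H^*|=m$, yet $|\CC|$ can be as small as roughly $m+2$ (one core per black node plus $C^*$), so your intermediate claim would force $m\le 2\be$ and fails for long paths. Your proposed repairs (thinning comparable same-image chains to an antichain, or re-choosing $\phi(e)$ when its owner collides with a target $b_{e'}$) have no mechanism behind them: sparseness bounds the number of cores crossing a fixed set, not the number of nested sets crossed by a fixed core.

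The paper sidesteps exactly this by never bounding all of $H^*$. It applies the $\be$-crossing property only to $I^*=\{e\in H^*: b_e\in L\cap B^*\}$, the heavy edges whose assigned black node is a \emph{leaf}; these are pairwise incomparable, so one chosen chain-set per edge gives a pairwise disjoint family and $\be$-crossing yields $|L\cap B^*|\le \be\cdot|\{C_e: e\in I^*\}|$. Moreover the crossing cores $C_e$ are automatically distinct from the $|L|$ leaf-owned cores (a core crossing a chain set cannot be contained in any leaf), giving $|\CC|\ge |L|+|L\cap B^*|/\be$ --- this replaces your problematic disjointness step $m+t\le|\CC|$, which suffers from exactly the owner-collision you noticed. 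The internal heavy edges, the ones your charging cannot control, are then absorbed without any crossing argument via the inclusion--exclusion identity $|L|+|B^*|=|L\cup B^*|+|L\cap B^*|\le |B|+|L\cap B^*|$ applied to the bound of Lemma~\ref{l:tree''}, charging them against $|B|\le|\CC|$ rather than against $\be$-crossing. If you restrict your map $\phi$ to $I^*$ and substitute the paper's inclusion--exclusion step for your bound $|L|\le|B|$, your argument becomes correct; as stated, it rests on an unproven and apparently false inequality.
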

\begin{proof}
We claim that  
\[
|\CC| \ge |L|+|L \cap B^*|/\be \ge |L \cap B^*| \cdot (1+1/\be)
\]
To see this consider the set of edges 
$I^*=\{e \in H^*:b_e \in L \cap B^*\}$; namely, 
$e \in I^*$ if $e \in H^*$ (so $e$ is heavy but is not the upper edge of a bad pair) and 
the black node $b_e$ assigned to $e$ is a leaf.
Note the following.
\begin{itemize}
\item
For any two edges in $I^*$, none of them is a descendant of the other, 
since if $e \in I^*$ has a descendant edge $f \in I^*$,
then $b_e$ is between $e$ and $f$ contradicting that $b_e$ is a leaf. 
\item
For every $e \in I^*$ there is a core $C_e \in \CC$ that crosses all the sets in the white chain of $e$.
\end{itemize}
For every $e \in I^*$ chose some set $S_e$ from the white chain of $e$. 
The sets $S_e$ are pairwise disjoint. Since $\FF$ is $\be$-crossing, 
in any set of $\be+1$ edges from $I^*$ there are two edges $e,f$ with $C_e \neq C_f$.
Consequently, $|\{C_e:e \in I^*\}| \ge |I^*|/\be=|L \cap B^*|/\be$. 
There are also $|L|$ cores contained in leaves of $T$.
Thus 
$|\CC| \ge |L|+  |\{C_e:e \in I^*\}| \ge |L|+|L \cap B^*|/\be$.

\medskip

By Lemma~\ref{l:tree''} $w(I) -3|B|-|B^*| \le |L| +|B^*|$. Thus since $L \cup B^* \subs B$ we get
\[
w(I) -3|B|-|B^*| \le |L| +|B^*| = |L \cup B^*|+|L \cap B^*| \le |B|+|\CC|/(1+1/\be)
\]
Consequently, 
$w(I) \le 4|B|+|B^*|+|\CC|/(1+1/\be) \le 5|\CC| +|\CC|/(1+1/\be)$,
as required.
\end{proof}

Note that we proved approximation ratio $\f{6\be+5}{\be+1}$.
We can provide an example without edges of weight $5$ and without bad pairs 
such that $w(I) \approx \f{6 \be}{\be+1}$. 
Consider the example in Fig.~\ref{f:tight6}.
Suppose that $|L|=2^i$ and $\be=2^j$ for some $0 \le j<i$. 
Fig.~\ref{f:tight-be} illustrates the construction for $j=1$ and $i=3$, where 
two nodes are colored by the same color if they belong to the same core. 
Consider the minimal rooted subtrees of $T$ with exactly $\be=2^j$ leaves. 
Each such subtree will be exactly as in the example in Fig.~\ref{f:tight6} --
the leaves are colored by distinct colors, while the other nodes in $U$ all have the same color,
which we call ``the color of the subtree''.
For each subtree, its root is not in $U$ (the union of all cores), 
while the parent of the root is in $U$ and has the color of the subtree.
The grandparent of the root is again a node not in $U$ and joins two subtrees;
the parent of the grandparent will is colored by the color of one of these subtrees. 
In a similar manner we propagate the colors upwards, where 
a node not in $U$ joins two subtrees and its parent is colored by the color of one of these two subtrees. 
Note that such ``coloring'' does not violate the $\be$-crossing property. 
In the constructed example, $w(I)=6|L|-2$ (the weight is the same ias in the example in Fig.~\ref{f:tight6}) 
and $|\CC|=|L|+|L|/\be$, hence when $|L|$ is large $w(I)/|\CC| \approx \f{6}{1+1/\be}=\f{6 \be}{\be+1}$.

\begin{figure} \centering \includegraphics[scale=0.85]{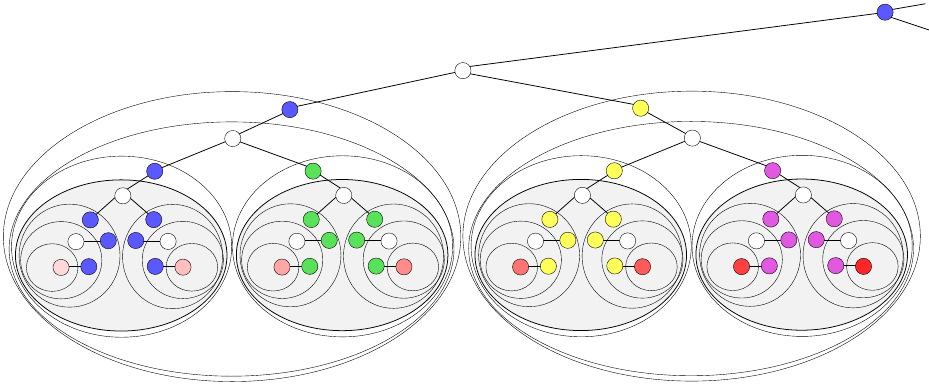}
\caption{Illustration of the construction of a tree with $w(I)=6|L|-2$ and $|\CC|=|L|+|L|/\be$ with $|L|=2^3$ leaves and $\be=2^1$.}
\label{f:tight-be} \end{figure}


\end{document}